%
%
%
%
%
\RequirePackage{fix-cm}
\documentclass[twocolumn]{svjour3}          
\smartqed  
\usepackage{graphicx}
%
%
%
%
%

\usepackage{graphicx}
\usepackage{amsmath} 

\usepackage{xcolor}
\usepackage{algorithm}
\usepackage{algpseudocode}
\usepackage{array}
\usepackage{subcaption}

\usepackage{amsthm}

\usepackage{tabularx}
\usepackage{hyperref}
\usepackage{xurl}
\usepackage{enumitem}
\setlist[itemize]{labelsep=0.5em, leftmargin=1em}
\usepackage{threeparttable}
\sloppy

\begin{document}

\title{Chasing the Shadows: TTPs in Action to Attribute Advanced Persistent Threats
}


\author{Nanda Rani         \and
        Bikash Saha \and
        Vikas Maurya \and
        Sandeep Kumar Shukla 
}


\institute{N. Rani \at
              Department of Computer Science \& Engineering, Indian Institute of Technology, Kanpur \\
              \email{nandarani@cse.iitk.ac.in}           
           \and
           B. Saha \at
              Department of Computer Science \& Engineering, Indian Institute of Technology, Kanpur \\
              \email{bikash@cse.iitk.ac.in}
              \and
           V. Maurya \at
              Department of Computer Science \& Engineering, Indian Institute of Technology, Kanpur \\
              \email{vikasmr@cse.iitk.ac.in}
              \and
           S. K. Shukla \at
              Department of Computer Science \& Engineering, Indian Institute of Technology, Kanpur \\
              \email{sandeeps@cse.iitk.ac.in}
}

\date{Received: date / Accepted: date}

\maketitle

\begin{abstract}
The current state of Advanced Persistent Threats (APT) attribution primarily relies on time-consuming manual processes. These include mapping incident artifacts onto threat attribution frameworks and employing expert reasoning to uncover the most likely responsible APT groups. This research aims to assist the threat analyst in the attribution process by presenting an attribution method named CAPTAIN (\textbf{C}omprehensive \textbf{A}dvanced \textbf{P}ersistent \textbf{T}hreat \textbf{A}ttr\textbf{I}butio\textbf{N}). This novel APT attribution approach leverages the Tactics, Techniques, and Procedures (TTPs) employed by various APT groups in past attacks. 
CAPTAIN follows two significant development steps: baseline establishment and similarity measure for attack pattern matching.
This method starts by maintaining a TTP database of APTs seen in past attacks as baseline behaviour of threat groups. The attribution process leverages the contextual information added by TTP sequences, which reflects the sequence of behaviours threat actors demonstrated during the attack on different kill-chain stages. 
Then, it compares the provided TTPs with established baseline to identify the most closely matching threat group. CAPTAIN introduces a novel similarity measure for APT group attack-pattern matching that calculates the similarity between TTP sequences. The proposed approach outperforms traditional similarity measures like Cosine, Euclidean, and Longest Common Subsequence (LCS) in performing attribution.
Overall, CAPTAIN performs attribution with the precision of $61.36\%$ (top-1) and $69.98\%$ (top-2), surpassing the existing state-of-the-art attribution methods.
\keywords{Advanced Persistent Threats (APT) attribution \and Threat attribution \and Threat intelligence \and Tactics Techniques and Procedures (TTP) \and APT groups \and TTP extraction \and TTP sequence similarity}
\end{abstract}

\section{Introduction}
\label{sec:introduction}
Advanced Persistent Threats (APTs) pose a formidable challenge by employing sophisticated techniques, covert operations, and prolonged attacks on government organizations and critical infrastructures. These attacks are meticulously planned and performed by a group of attackers (usually sponsored by nation-states) working together for a common goal, usually called APT groups.
Identifying the responsible APT group for an attack is known as \emph{APT attribution}~\cite{Mei2022A,Steffens2020Attribution,sachidananda2023apter}.
Attribution holds significant value as it enables organizations to understand the motives and capabilities of APT groups. Moreover, it aids in effective defensive strategies and incident response investigations and promotes international collaboration to counter global cyber threats~\cite{Mei2022A,Steffens2020Attribution}.
This research attempts to empower defenders with an edge over adversaries by developing an automated system for uncovering the responsible entities behind sophisticated attacks. 

The attribution process involves analysis of various attack artifacts~\cite{alshamrani2019survey,Steffens2020Attribution}, such as malware samples, network traffic, system logs, and indicators of compromises (IOCs), and sometimes non-technical artifacts like geo-political consideration, telemetry to find the link between sophisticated attacks and associated threat actors.
Generally, the incident response team uses their expertise and experience to map the demonstrated modus operandi with the known threat group's behaviours~\cite{egloff2020public}.
In this mapping process, they leverage the traditional frameworks such as the diamond model~\cite{Ren2022CSKG4APT,Caltagirone2013Diamond}, which require extensive manual effort and expert reasoning~\cite{Mei2022A}. In addittion, the human-in-the-loop and manual analysis in the traditional threat attribution process are burdensome and susceptible to subjectivity. Therefore, there is a need for an automated method to reduce the efforts of the analysts and assist them. 


According to Pyramid of Pain~\cite{Bianco2014Enterprise}, lower-level artifacts are easy to alter for threat actors (shown in Fig.~\ref{fig:pop}), and therefore, methods based on these artifacts are not much robust. The higher-level artifact, i.e., Tactics, Techniques, and Procedures (TTPs), is the most challenging aspect for attackers to fabricate (shown in Fig.~\ref{fig:pop}) and, therefore, facilitates robust ways to relate malicious activity compared to other IOCs. TTPs represent the modus operandi of the threat actors, including their preferred attack vectors, tools, techniques, and operational behaviors~\cite{Daszczyszak2019TTP}. The Pyramid of Pain~\cite{Bianco2014Enterprise} inspires considering TTPs for threat attribution and gives confidence and an initial base for understanding the unique attacking patterns of threat groups. By examining TTPs, analysts can identify patterns, similarities to known threat groups, and signatures that differentiate the attacker's modus operandi~\cite{Daszczyszak2019TTP}.

\begin{figure}[ht]
    \centering
    \includegraphics[width=\columnwidth]{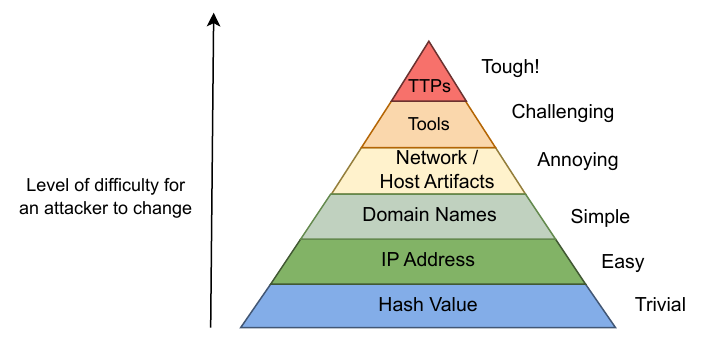}
    \caption{Pyramid of Pain~\cite{Bianco2014Enterprise}}
    \label{fig:pop}
\end{figure}


Due to the stealthiness and longevity of the APT campaigns, there is a dearth of substantial numbers of attack campaigns associated with specific APT groups.
By referring to traditional and standard knowledgebases like the MITRE's APT group list~\footnote{Groups \url{https://attack.mitre.org/groups/}}, we can observe that the number of attack campaigns performed by individual APT groups is limited. 
As the attribution is performed campaign-wise, it creates a barrier for training-based methods that require a large number of samples per threat group to train the model. 
Consequently, there is a pressing need for a data-efficient method that can be developed without training on a huge sample set. Therefore, we present a method named CAPTAIN (\textbf{C}omprehensive \textbf{A}dvanced \textbf{P}ersistent \textbf{T}hreat \textbf{A}ttr\textbf{I}butio\textbf{N}), which is a TTP-based, data-efficient, and attack-pattern matching method to identify the associated threat group of given attack campaign.


The threat posed by APT groups involves executing a series of sophisticated attack vectors over an extended period, following the life cycle of APTs containing several kill-chain stages~\cite{che2024systematic,alshamrani2019survey,sharma2023advanced}. The order in which the attack vectors are used, or the sequence of TTPs observed in the kill-chain phases, provides additional contextual information and illustrates the threat actor's behavior throughout the attack. The sequential information also reflects their choice of attack vectors, tools, and techniques for a specific kill-chain phase. The contribution of sequences in terms of their correlation is experimented with and discussed in Section~\ref{subsec:SeqCorr}.
The TTP-based attribution method in the literature employs the mere presence of the TTPs and misses the contextual characteristics of the set of TTPs observed~\cite{Noor2019A}. CAPTAIN leverages the contextual information provided by the sequence of behaviours seen, i.e., the sequence of TTPs employed during the attack. It examines the sequence of techniques that APT groups exhibit during different attack phases and how frequently they repeat patterns of TTPs within a single attack campaign.

Given a set of TTPs observed in the attack campaign, CAPTAIN assists the incident response team by sequencing them based on kill-chain for performing attribution and provides the most probable linked threat group based on known and observed attack patterns in terms of TTPs. Hence, the input of the CAPTAIN is a set of TTPs, and the output is the attributed threat group.
Along with the limited group-wise attack campaigns, there is a notable gap in the availability of a publicly structured TTP-based attribution dataset which restricts the development of attribution methodologies. The public data is available primarily in the form of threat intelligence reports published by security firms, which can be leveraged to prepare the TTP database. These threat reports contain end-to-end attack flow information in natural language, explaining the attacker's modus operandi. Therefore, we leverage publicly available threat reports to prepare the TTP database in this experiment to perform attribution.
There are several methods that can translate the explained TTPs in the threat report sentences into their corresponding MITRE ATT\&CK TTPs~\cite{Husari2017Ttpdrill,alam2023looking,TTPXHunter2024Nanda,Rani2023TTPHunter} and aid in preparing the TTP-database for attribution.

The proposed method starts with maintaining the TTP database of known APT groups as a baseline.
The database is prepared by leveraging the past attack campaign's threat report and employing our tool named TTPXHunter~\cite{TTPXHunter2024Nanda}, an extended version of our previous tool named TTPHunter~\cite{Rani2023TTPHunter}, for TTP extraction from threat reports.
CAPTAIN creates the TTP sequence from the extracted set of TTPs based on attack kill-chain phases. 
It combines MITRE ATT\&CK framework~\cite{MITREATTACK} and Unified-Kill-Chain (UKC) model~\cite{Pols2017The} to arrange the observed TTPs in the sequence of attack flow for each attack sample present in the dataset.
After the TTP extraction and sequencing, we perform pattern matching of attack patterns with an established baseline. We propose a novel similarity measure for pattern matching that can capture subsequence similarity based on the length of matched subsequence patterns and their frequency within the sequence.
This measure allows us to match the TTP sequence and attribute APT groups for a given attack campaign. 

During the testing phase, CAPTAIN creates the TTP sequence for a set of TTPs as a given test sample. Then, it uses the proposed similarity measure to compute the similarity between the test sample and the TTP sequences present in the established baseline database. The threat group whose TTP sequences are more similar to the given set of TTPs is the attributed threat group. 
We evaluate the performance of CAPTAIN by preparing a TTP database from past attack campaign details maintained in the security firm's threat reports. We compare the proposed similarity measures with traditional similarity measures and available attribution methods present  in the literature. The results demonstrate that the proposed similarity measure outperforms the traditional similarity measures and available attribution methods. 


The key contributions of this research include the following: 
\begin{itemize}
    \item We propose a novel attribution method named CAPTAIN~\footnote{We plan to make it available publicly for further research progress after the research paper is accepted.}, which utilizes the flow of the attacker's behaviour in the form of TTP sequences.
    The method prepares TTP sequences by leveraging the kill-chain framework and employs attack patterns matching using the past attack campaign's TTP sequences used by APT groups.
    \item We prepare a structured TTP-based attribution dataset~\footnote{\url{https://github.com/nanda-rani/CAPTAIN} - This is the only Glimpse of the dataset. The remaining part, including the code, will be released after the research paper is accepted.} using our tool TTPXHunter~\cite{TTPXHunter2024Nanda}, an extended version of our previous tool named TTPHunter~\cite{Rani2023TTPHunter}, to facilitate the attribution of APT attacks. The dataset encompasses the $580$ TTP sequences observed in past attack campaigns belonging to $11$ APT groups. 
    \item We propose a novel similarity measure for the attribution process. It considers matching TTP sub-sequences and their frequency to establish links with the APT group's behaviour stored in the database. This measure is inspired by Longest Common Subsequence (LCS)~\cite{Hirschberg1977Algorithms} and Gelstat Pattern Matching~\cite{Ratcliff1988Pattern} algorithms.
    \item We perform experiments to compare our proposed similarity measure with traditional similarity measures for performing attribution. These experiments demonstrate the comparison between attribution performance based on mere presence of TTPs and TTP sequences. We also implement an existing literature to compare performance of CAPTAIN with the state-of-the-art attribution method.
\end{itemize}

The remainder of the paper is as follows: Section~\ref{sec:RelatedWork} presents a thorough review of related work. Section~\ref{sec:background} provides the essential background information needed to understand the proposed methodology. We comprehensively explain our proposed method CAPTAIN, including the baseline establishment and similarity measure for attack pattern matching in Section~\ref{sec:CAPTAIN}. We delve into the intuition behind our proposed similarity measure and provide necessary proofs in Section~\ref{subsec:IntuttionProof}. Further, in Section~\ref{sec:ExperimentResult}, we elaborate on the experiments conducted, their results, evaluation, and noteworthy observations. We discuss the challenges faced and limitations in Section~\ref{sec:ChallengeLimitation}, while Section~\ref{sec:conclusion} concludes our research and discusses potential avenues for future work.

\section{Related Work}
\label{sec:RelatedWork}


This section presents an extensive related work and available database in the literature for uncovering the responsible threat group. 
We also compare the attribution method present in the literature and proposed method.

\subsection{Attribution Data in Literature}
We explore the public domain to find the APT group's modus operandi dataset, which can be used to perform attribution. There is an extensive knowledgebase of the APT group's modus operandi presented by MITRE~\cite{MITREgroup}. This knowledgebase contains all sets of modus operandi that have been seen for a specific threat group to date. However, attribution is a process of identifying the most likely responsible threat group of an attack campaign~\cite{Steffens2020Attribution}. So, for any algorithm to understand the threat actor's way of attack, it may need the data sample, i.e., modus operandi, attack-wise rather than all together.
Our exploration also found prominent security firm's threat reports, which contain information regarding end-to-end attack analysis, how the attack started, how it progressed, and how it was detected. The past attack threat reports can be a great resource for collecting the modus operandi of threat actors seen in past attacks due to their comprehensiveness. Moreover, these reports are present in unstructured natural language form, which needs to be processed to transform into structured data.

We find that the lack of structured APT group's modus operandi dataset hinders threat group's behaviour analysis. We address this gap by leveraging past attack campaign threat reports to prepare structure dataset. To create such a dataset from threat reports, we need certain tool to interpret and translate threat actor's modus operandi from natural language into TTPs.
There are various research present based on keyword-phrase matching, ontology, graphs, and language models in the literature to translate TTPs from natural langugae text~\cite{Husari2017Ttpdrill,alam2023looking,TTPXHunter2024Nanda,Rani2023TTPHunter}. 
In our previous work, we present language based tool named TTPHunter~\cite{Rani2023TTPHunter} to map relevant sentences to MITRE ATT\&CK TTPs and we further extend it by presenting state-of-the-art TTPXHunter~\cite{TTPXHunter2024Nanda} which is based on cyber domain-specific language model and outperforms other state-of-the art methods.
Therefore, we leverage our tool TTPXHunter~\cite{TTPXHunter2024Nanda} to extract MITRE ATT\&CK TTPs in the structured format from past attack threat reports and prepare the attribution dataset.

\subsection{Attribution Methodology}
\label{subsec:AttributionMethodLiterature}

In the current literature, attribution is performed using traditional frameworks or \sloppy
automated methods. The automated threat attribution methods majorly based on leveraging malware sample, past threat analysis reports, and modus operandi in terms of TTP.

Kida et al.~\cite{Kida2022Nation} propose malware-based attribution that feeds fuzzy hashes into machine learning-based classifiers as natural language input to attribute APT group. Wang et al.~\cite{Wang2021Explainable} identify groups based on malware's string and code features. They feed both types of features to the classifier to attribute the group and employ Local Interpretable Model-agnostic Explanations (LIME) to explain which feature aids in identifying the group. Rosenberg et al.~\cite{rosenberg2017deepapt} employ raw dynamic features extracted from cuckoo sandbox analysis reports, treating each unique word as a distinct feature, and use one-hot encoding to prepare their dataset. They apply deep neural networks to detect patterns within these features for classifying APT groups. Notably, their model recognizes the associated nation instead of the specific threat group, resulting in less precise in-group identification. The number of sophisticated malware is increasing rapidly and offers machine learning-based analysis. 
However, malware-based attribution often faces two problems: \textbf{(P1)} Malware can be modified, shared, or sold, leading to significant challenges in pinpointing the true source of a cyber attack~\cite{gray2024identifying}. This approach struggles to provide a consistent view of threat actors when they employ different malware across various campaigns. \textbf{(P2)} The reliance on malware signatures can miss coordinated, multi-faceted attacks that involve a range of attack vector and tools and don't rely on single malware instance.



Perry et al.~\cite{perry2019no} introduce NO-DOUBT, a machine learning approach for attributing attacks using textual analysis of threat intelligence reports. They present, SMOBI (Smoothed Binary vector), an improved bag of words which provides a similar vector for semantically similar words. Along with identifying known threat actors, they also introduce identifying novel threat actors by assessing the prediction probabilities of the trained model.
Further, Naveen et al.~\cite{Naveen2020Deep} improved the feature representation method named SIMVER. Their method is similar to~\cite{perry2019no}, in which they replace the bag of words with word2vec and assign similar word2vec embeddings to similar words, where the similarity between words is calculated using cosine measure. 
Both~\cite{Naveen2020Deep,perry2019no} convert embeddings of natural language text in the report using their corresponding method, i.e., SMOBI or SIMVER, and employ a classifier to attribute the group for any given threat report. Both of these methods focus on utilizing words present in threat reports as features. Considering words present in the report suffers two  problems:  \textbf{(P3)} Words comparison represents more like performing threat report similarity rather than the similarity between threat group's characteristics. \textbf{(P4)} Given P3, identifying features contributing towards attribution classification is essential to understand their impact and build a reliable attribution model.

Using machine learning and deep learning, Noor et al.~\cite{Noor2019A} present TTP-based threat attribution. They leverage a semantic-based text matching to extract TTPs from threat reports and convert the extracted TTPs in one-hot encoding to get the correlation matrix as a feature vector. Further, they implemented various models and find that DLNN (Deep Learning Neural Network) performed better than other implemented models. Their dataset comprises $324$ samples distributed over a $36$ APT group. The group with a maximum number of samples consists of $23$ samples, and the group with a minimum number of samples contains only $3$ samples. On average, they report that their dataset consists of $9$ samples for each threat group. The highly imbalanced and limited dataset may lead to bias problems in the trained DLNN model. Also, for TTP extraction, their method relies on semantic search rather than contextual information represented by sentences in the report.
\\
\\
A comparison of literature with our proposed method CAPTAIN is shown in Table~\ref{tab:APT_LR}. The available literature methods for APT attribution consist of several issues, such as non-reliable malware-based attributions, limited TTP datasets, and limited attribution methods. Our proposed method, CAPTAIN, addresses discussed limitations by leveraging the attacker's operational behaviour pattern matching. CAPTAIN leverages TTPs as a feature, encompassing all attack vectors, tools, and techniques threat actors use. As the attribution needs to be done attack campaign-wise~\cite{Steffens2020Attribution}, collecting a dataset with numerous attack behaviour samples for any specific group is challenging. Therefore, we focus on pattern-matching methods rather than learning through training methods. CAPTAIN presents a TTP-based, data-efficient, and attack pattern-matching method for APT attribution.

\begin{table*}[!ht]
    \centering
    \caption{Attribution Methodology Literature Review Comparison}
    \label{tab:APT_LR}
    \begin{tabular}{p{1.5cm}p{1.2cm}p{1.5cm}p{1.3cm}p{10.5cm}}
        \hline
         \textbf{Methods} & \textbf{Artifact} & \textbf{Feature} & \textbf{Model} & \textbf{Comment}  \\
         \hline
         Kida et al.~\cite{Kida2022Nation} & Malware-based & Static Fuzzy hash with NATO Encoding & Random Forest Classifier & 
         \begin{itemize} [nosep, leftmargin=*]
         \vspace{-0.2cm}
            \item Leverage NATO encoded fuzzy hash of malware sample and pass it to the classifier for attribution
            \item The authors reported their method's performance struggles to surpass the state-of-the-art, aiming to understand how fuzzy hashing contributes to identifying nation-state actors.
            \item As the method is leveraging malware sample to attribute threat groups, it faces problems \textbf{P1} and \textbf{P2} as discussed in~\ref{subsec:AttributionMethodLiterature}
            \vspace{-0.4cm}
         \end{itemize}
         \\
         \hline
         Rosenberg et. al~\cite{rosenberg2017deepapt} & Malware-based & Unique words from Cuckoo Sandbox reports & Deep Neural Network Classifier & 
         \begin{itemize}[nosep, leftmargin=*] 
         \vspace{-0.2cm}
            \item Use words present in the Cuckoo sandbox reports, convert them into one-hot encoded vectors for attribution classification
            \item Attributes at the country level rather than the threat group level
            \item Word semantics of Cuckoo report reflect malware functionality rather than threat actor behavior
            \item As the method is leveraging malware sample to attribute threat groups, it faces problems \textbf{P1} and \textbf{P2} as discussed in~\ref{subsec:AttributionMethodLiterature}
            \vspace{-0.4cm}
        \end{itemize}
        \\
        \hline
         Wang et al.~\cite{Wang2021Explainable} & Malware-based & Code and String Features & Deep Neural Network Classifier  & 
         \begin{itemize}[nosep, leftmargin=*] 
         \vspace{-0.2cm}
             \item Employ strings and intermediate code features as natural language and forward it to the classifier after getting feature vector using bag-of-words and paragraph vector
             \item Rely on static features for the decision which is challenging in the case of obfuscated samples. The obfuscation is common in APT types of attacks
             \item As the method is leveraging malware sample to attribute threat groups, it faces problems \textbf{P1} and \textbf{P2} as discussed in~\ref{subsec:AttributionMethodLiterature}
            \vspace{-0.4cm}
            \end{itemize}
            \\
        \hline
        Perry et. al~\cite{perry2019no} & Threat-report & SMOBI text embeddings & XGBoost Classifier  & 
        \begin{itemize}[nosep, leftmargin=*] 
        \vspace{-0.2cm}
            \item Their method uses an improved bag-of-words approach called SMOBI to convert threat report texts into feature vectors and pass them to a classifier for performing attribution
            \item They use report text as a feature, better for identifying similar reports than pinpointing threat actor behavior. SMOBI focuses on word frequency and similarity, ignoring semantics and context
            \item As the method is leveraging words present in the threat report to attribute threat groups, it faces problems \textbf{P3} and \textbf{P4} as discussed in~\ref{subsec:AttributionMethodLiterature}
            \vspace{-0.4cm}
        \end{itemize}
        \\
         \hline
         Naveen et al.~\cite{Naveen2020Deep} & Threat-report & Threat Report Text Embedding, an extended version of SMOBI~\cite{perry2019no} & Deep Neural Network Classifier & 
         \begin{itemize}[nosep, leftmargin=*] 
         \vspace{-0.2cm}
             \item Builds on~\cite{perry2019no}, replacing the base bag-of-words with word2vec to enhance semantic meaning in feature vectors from threat report texts
             \item Utilize report texts as a feature, better for identifying similar reports than pinpointing threat actor behavior. SIMVER focuses on word semantics but ignores the contexts
             \item As the method is leveraging words present in the threat report to attribute threat groups, it faces problems \textbf{P3} and \textbf{P4} as discussed in~\ref{subsec:AttributionMethodLiterature}
            \vspace{-0.4cm}
        \end{itemize}
        \\
         \hline
         Noor et al.~\cite{Noor2019A} & TTP-based & Correlation Matrix & Deep Neural Network Classifier & 
         \begin{itemize}[nosep, leftmargin=*] 
         \vspace{-0.2cm}
            \item Extract TTPs from threat reports using a semantic-based search method and pass the one-hot encoded TTPs to a classifier
            \item  Model is trained on the highly imbalanced and limited dataset, and their feature transformation method considers only the existence of TTPs and ignores contextual information such as when and how it has been used
            \item Their TTP extraction method is based on semantic  similarity rather than the contextual meaning of sentences
            \vspace{-0.4cm}
        \end{itemize}
        \\
         \hline
         \textbf{CAPTAIN (Proposed)} & \textbf{TTP-based} & \textbf{TTP Sequence features} & \textbf{Novel Similarity Method} & 
         \begin{itemize}[nosep, leftmargin=*] 
         \vspace{-0.2cm}
            \item \textbf{A novel way to leverage contextual information of modus operandi by sequencing them based on kill-chain}
            \item \textbf{Based on attacker's modus operandi which can directly hint about their preferred choice and attack vectors}
            \item  \textbf{Employ attack-pattern similarity matching method to identify threat actor's behavioural patterns per group}
            \vspace{-0.4cm}
        \end{itemize}
        \\
         \hline
    \end{tabular}

\end{table*}

\section{Background}
\label{sec:background}

This section aims to establish a fundamental understanding of the key concepts and tools vital for cyber threat analysis and attribution, including models and frameworks that dissect and categorize cyber attack stages and strategies.

\subsection{Unified-Kill-Chain (UKC) model}
\label{subsec:ukc}

\begin{table*}
\caption{Notations and their description}\label{tab:parameters}
\centering
\begin{tabular}{ll}
\hline
 \textbf{Notation}  & \textbf{Description}\\
\hline 
$I^k$ & $k$ length integer vector\\
$\lambda$ & $k$ length integer vector ($\lambda \in I^k $) of common subsequences between two TTP sequences\\
$\mu$ & frequency vector ($\mu \in I^k $) of common subsequences corresponding to elements of $\lambda$\\
$\langle \mu,\lambda \rangle$ & dot product operation of two vectors $\mu$ and $\lambda$\\
$seq_i$ & sequence of TTPs in $i^{th}$ threat report\\
$Sim(seq_i,seq_j)$ & similarity score between $seq_i \text{ and } seq_j$\\
$n$ & length of $seq_i$\\
$m$ & length of $seq_j$\\
$A$ & attack campaign as a test sample\\
$\mathcal{T}^{(g)}$ & all TTP sequence samples of $g^{th}$ group present in database \\
$T_i^{(g)}$ & $i^{th}$ TTP sequence sample of $g^{th}$ group\\
$N^{(g)}$ & number of TTP sequence samples in $g^{th}$ group\\
$N$ & total number of TTP sequence samples present in the database\\
$\mathcal{S}(A,\mathcal{T}^{(g)})$ & attribution score of test sample $A$ for APT group $g$ \\
$\hat{g}$ & attributed group \\
$G$ & total number of APT groups present in the database \\
\hline
\end{tabular}
\end{table*}

The UKC model offers a comprehensive understanding of the tactics employed by APTs and ransomware groups~\cite{Pols2017The}. By amalgamating and extending existing models, this unified approach sheds light on the sequential order of attack phases from initiation to end. The UKC model comprises 18 attack phases and the specific sequence for each stage. 
Author Paul Pols highlights the limitations of the traditional Cyber Kill Chain (CKC)~\cite{Hutchins2011Intelligence} proposed by Lockheed Martin, which primarily focuses on parameters and malware, neglecting other attack vectors beyond organizational boundaries.
In contrast, the UKC model addresses these shortcomings by offering significant improvements and adopting a time-agnostic approach to tactics, as seen in MITRE's ATT\&CK framework. Furthermore, Pols incorporates pivoting within the UKC model to illustrate the pivotal role of choke points in attacks. Overall, the UKC framework~\cite{Pols2017The} facilitates a structured analysis and enables a deeper understanding of the attacker's tactical modus operandi. 

\subsection{MITRE ATT\&CK Framework}
\label{subsec:MITREframework}
The MITRE ATT\&CK framework~\cite{MITREATTACK}, recognized globally, serves as an extensive knowledge base and framework for understanding cyber threats and adversary behaviors across different attack stages. It aims to bolster organizational capabilities in detecting, preventing, and responding to cyber threats by providing a standardized taxonomy for the tactics, techniques, and procedures (TTPs) used by attackers. The framework encapsulates Tactics, depicting adversary's objectives; Techniques, detailing the methods for achieving tactics; and Procedures, offering granular insights into how specific techniques are employed to fulfill tactical goals. With 14 tactics ranging from Reconnaissance to Impact and encompassing techniques like phishing and PowerShell exploitation, the ATT\&CK framework illustrates real-world attack scenarios. It facilitates enhanced defensive strategies, incident response, and threat intelligence for cybersecurity practitioners of both offensive and defensive.

\subsection{TTPXHunter}
\label{subsec:TTPXHunter}

TTPXHunter is an extension of our previous tool named TTPHunter~\cite{Rani2023TTPHunter} based on natural language-model for identifying TTPs explained in threat report sentence. TTPHunter is based on traditional BERT model~\cite{devlin2018bert} and limited to $50$ frequently used TTPs due to lack of sentence-TTP dataset. In TTPXHunter, we introduce domain-specific language model SecureBERT~\cite{Aghaei2023SecureBERT} and extend the capability of model to identify an impressive array of $193$ distinct TTPs present in MITRE ATT\&CK knowledgebase.
TTPXHunter is a fine-tuned domain-specific language model to map the sentences to the relevant TTPs they represent. 
One noteworthy feature of TTPXHunter is its ability to discern the significance of sentences present in the threat report. Rather than indiscriminately mapping all sentences from a threat report, TTPXHunter selectively identifies and considers only those sentences that truly explain TTPs, disregarding irrelevant information. The TTPXHunter implements a filtering mechanism on the classification model's predicted probabilities to achieve this. The experiments done to evaluate the trained model discovered that sentences with a probability higher than $0.644$ are relevant, while those below the threshold were considered irrelevant. We adopt the same threshold to filter irrelevant sentences from threat reports, and extract TTPs from only relevant sentences to prepare TTP-based attribution dataset.

\section{CAPTAIN}
\label{sec:CAPTAIN}


Our methodology CAPTAIN consists of two key development phases: baseline establishment and similarity measure for attack pattern matching. In the baseline establishment phase, we prepare the TTP sequence database of APT groups observed in the past attack campaigns. We extract TTPs from the past attack campaign threat reports to prepare the database. Further, we arrange the extracted TTPs in sequences based on the kill-chain phases of cyber-attacks. It signifies the sequential order of TTPs employed by attackers during the attack campaign. In the next step, we develop a similarity measure to perform attack pattern matching, which aims to capture the patterns represented by sequences of TTPs reflecting the attacker's behaviour throughout the attack stages.
To attribute, we match similar behaviour by performing pattern matching using the proposed similarity measure. Within this process, we identify two significant aspects: the length of the matched pattern and their frequency, which are crucial considerations to identify similarity pattern, as they contribute to attributing the responsible APT group. We present detailed explanations of these components below, with a list of notations and their corresponding descriptions explained in Table~\ref{tab:parameters}.

\subsection{Baseline Establishment}
\label{subsec:DatabasePreparation}
To prepare the baseline database for attribution, we must have a comprehensive collection of TTPs used in past attack campaigns by APT groups. This collection represents the usual behaviour of known APT groups. No public repository exists for attack-wise TTPs, but security firms regularly publish reports detailing the tools and techniques used in past attack campaigns. We leverage these publicly available threat intelligence reports from blogs of leading security firms to address data challenges. We prepare a baseline database of TTP sequences seen in past attack campaigns. The database preparation consists of two steps: TTP extraction and TTP sequencing. The TTP extraction phase extracts TTPs from threat reports, and the TTP sequencing phase sequences the extracted TTPs based on the kill-chain phase, as shown in Fig.~\ref{fig:CaptainBaseline}. The details of both of these steps are discussed below.

\begin{figure*}
    \centering
    \includegraphics[width=\textwidth, height=11cm]{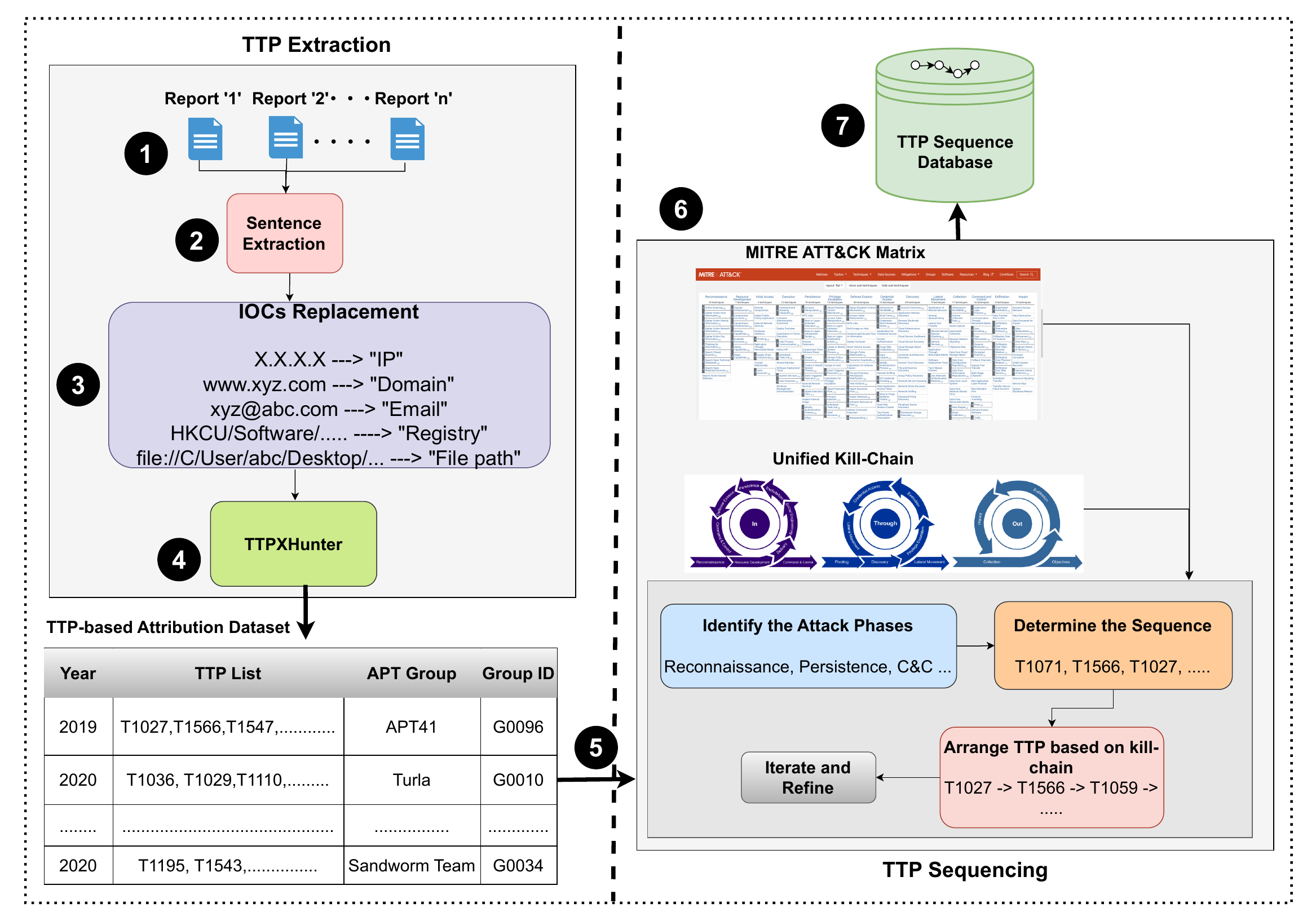}
    \caption{CAPTAIN: Baseline Establishment}
    \label{fig:CaptainBaseline}
\end{figure*}

\subsubsection{TTP Extraction}
\label{subsubsec:TTPExtraction}
The threat intelligence reports of security firms may not directly specify the TTPs used; they describe the TTPs in natural language. Therefore, we utilize  our tool named TTPXHunter~\cite{TTPXHunter2024Nanda} (discussed in Section~\ref{subsec:TTPXHunter}) to extract TTPs from these reports and collect the set of TTPs used by APT groups in past attack campaigns. 
We extract sentences from threat reports (step {\Large \textcircled{\normalsize 2}} in Fig.~\ref{fig:CaptainBaseline}) to identify TTPs explained in the sentences of reports.
We replace IOCs (Indicators of Compromise) with their base name present within the sentences in the dataset (step {\Large \textcircled{\normalsize 3}} in Fig.~\ref{fig:CaptainBaseline}). Specific patterns of IOCs present in the cyber-domain sentences, such as registry, IP address, domain name, and file path, can hinder the contextual interpretation of the sentences. To mitigate this issue, we identify a specific set of IOCs mentioned in~\cite{TTPXHunter2024Nanda,Rani2023TTPHunter} and employ a regex pattern to replace these patterns with their respective base name shown in Fig~\ref{fig:CaptainBaseline}.
We feed the processed sentences to our tool TTPXHunter to identify relevant sentences and recognize TTPs explained in the sentences (step {\Large \textcircled{\normalsize 4}} in Fig~\ref{fig:CaptainBaseline}). By employing this, we collect the TTPs seen in past attack campaigns. We label the TTPs set with the corresponding APT group name mentioned in the threat reports.

\subsubsection*{Assumptions}
The threat reports collected for this experiment go under the following assumptions:
\begin{itemize}
    \item We trust that the report has been prepared after completing a deeper analysis, which reflects the high quality and completeness of the threat report. To not compromise the report's trustworthiness, we select only prominent security firm's threat reports; the list is mentioned in Section~\ref{subsec:Dataset}.
    \item We trust the analysis and attribution performed by threat analysts of the prominent security firms who published the report. Therefore, we label the collected TTPs based on the threat group attributed in the corresponding threat reports.
\end{itemize}

\subsubsection{TTP Sequencing}
\label{subsubsec:TTPsequence}
Once we extract TTPs from threat reports (step {\Large \textcircled{\normalsize 5}} in Fig.~\ref{fig:CaptainBaseline}), we perform the sequencing of extracted TTPs. In this phase, we aim to arrange extracted TTPs based on the flow of the kill chain. The kill-chain-based TTP sequence helps to understand the APT group's behavior, such as the set of methods a specific group generally implements to achieve a particular phase of attack. 
It also helps to understand and plot end-to-end attack flow. Our TTP sequence framework merges the UKC and MITRE ATT\&CK matrix (step {\Large \textcircled{\normalsize 6}} in Fig.~\ref{fig:CaptainBaseline}) and map the sequence of attacker's behaviour in terms of TTPs from reconnaissance to the final attack stage. We divide MITRE ATT\&CK TTPs and connect TTPs with the related UKC phase. 
The kill-chain phases are interconnected sequentially, illustrating the progression of an attacker's activities during an attack. Our framework leverages this interconnectedness to generate a complete TTP sequence that captures the attack flow. The key steps involved in creating the TTP sequence are as follows:
\begin{itemize}
    \item \textit{Identify the attack phases} - The first step categorizes the MITRE ATT\&CK TTPs into distinct groups or phases corresponding to different attack stages. These phases include reconnaissance, social engineering, pivoting, lateral movement, Exfiltration, and other stages defined by the UKC model. This categorization allows us to grasp the attacker's overall strategy and discern the specific objectives associated with each phase.
    \item \textit{Determine the sequence} - This step arranges the TTPs in a logical order based on UKC that reflects the typical progression of an attack. Starting with the reconnaissance phase, we progress through privilege escalation and command \& Control communication and continue through all stages until reaching the objectives stage. This sequential arrangement captures the natural flow of an attack.
    \item \textit{Iterate and refine} - This step involves reviewing and refining the attack flow with input from our security team. This feedback loop allows for incorporating any newly identified TTPs and refining the attack flow to align with changes in the evolving attack landscape.
\end{itemize}

To understand how the TTP sequence is being created, we present an attack scenario as an example and demonstrate the creation of the TTP sequence. In the example attack scenario, the operation starts with Active Scanning (\texttt{T1595}) to perform reconnaissance and gather information about the victim's network. Next, the attacker focuses on Infrastructure Development (\texttt{T1583}) as the Resource Development phase for setting up command-and-control servers and creating malicious tools. Therefore, the sequence becomes \texttt{T1595 --> T1583}.
The attacker then sends spearphishing emails (T1566) to deliver and perform social engineering. Attackers use employee's actions to trigger Exploitation for Client Execution (\texttt{T1203}) and execute unauthorized code. Therefore, the sequence becomes \texttt{T1595 --> T1583 --> T1566 --> T1203}. After successful exploitation, the attacker establishes a network presence with the help of Process Injection (\texttt{T1055}) to gain persistence and uses the Obfuscated Files or Information (\texttt{T1027}) method to evade defense deployed at the victim side. Therefore, the sequence becomes \texttt{T1595 --> T1583 --> T1566 --> T1203 --> T1055 --> T1027}.
In the Command \& Control phase, attackers establish a communication channel using Application Layer Protocol (\texttt{T1071}) for c\&c communication, followed by performing Lateral Tool Transfer (\texttt{T1570}) in the Pivoting phase for network traversal. The attacker then engages in Account Discovery (\texttt{T1087}) and Exploitation for Privilege Escalation (\texttt{T1068}) to perform Discovery and Privilege Escalation, respectively.
The final stages involve Collection, Exfiltration, and Impact, marked by Automated Collection (\texttt{T1119}), Exfiltration Over the C2 Channel (\texttt{T1041}), and Data Destruction (\texttt{T1485}), completing the multi-stage attack. The complete attack TTP sequence utilizing our framework based on UKC for this attack scenario is shown in Fig.~\ref{fig:ttpseq}. 
\begin{figure}[!ht]
    \centering
    \includegraphics[width=\columnwidth, height=8cm]{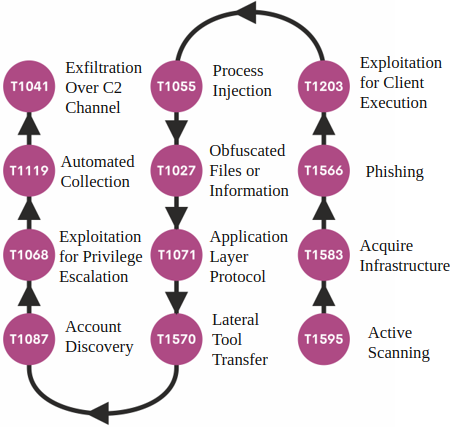}
    \caption{TTP sequence based on UKC}
    \label{fig:ttpseq}
\end{figure}
\\
\\
We generate a TTP sequence for each attack threat report, which explains the attack campaign in terms of TTP. We store the TTP sequences in a database (step {\Large \textcircled{\normalsize 7}} in Fig.~\ref{fig:CaptainBaseline}), which serves as a baseline database for the TTP sequence information, as shown in Fig.~\ref{fig:CaptainBaseline}. We leverage this database to analyze the sequence of TTPs used in an attack campaign. By examining these sequences, we identify behavioural patterns of APT groups across different attack campaigns.

\subsection{Similarity Measure}
\label{subsec:simialrityindex}
Once we establish the baseline database, we perform attack pattern matching for any given TTP's sequence with established sequence database. CAPTAIN employ a similarity measure to perform the pattern matching and assign a similarity score for each comparison.
We propose a novel similarity measure to identify patterns in TTP sequences and attribute the associated APT groups based on similarity score achieved. The derivation and rationale behind this similarity index are detailed in Section~\ref{subsec:IntuttionProof}. The proposed similarity measure possesses essential properties such as non-negativity, boundedness, symmetry, and sensitivity to differences. The proof of these properties is in~\ref{sec:properties}.
The basis of our proposed similarity measure is the identification of common behaviour patterns and the frequency of such common patterns between two attack campaign's TTP sequences. By identifying matching patterns between sequences, we can infer similarities in the TTPs observed during the attack, which helps to attribute the responsible APT group. We extract the common matched patterns and evaluate the frequency of each matched pattern length, such as $2$-length, $3$-length, and till the longest subsequence length of the set of TTPs to quantify the degree of similarity between the sequences.

Consider two sequences, denoted as $seq_i$ and $seq_j$, having lengths $m$ and $n$, respectively. In the case where $m \geq n > 0$, we define the similarity score $Sim(seq_i, seq_j)$ as follows: \\

\begin{equation}\label{eq:1}
    \boxed{
    Sim(seq_i,seq_j) = \frac{2\langle \mu,\lambda \rangle} {m2^{m-1}+n2^{n-1}}
    }
\end{equation}
The variables $\lambda \in I^k$ is a length vector whose elements represent lengths of each common subsequence present in $seq_i \text{ and }seq_j$. The variable $\mu \in I^k$ is a frequency vector in which $i^{th}$ elements represent the number of occurrences of $\lambda_i$ length common subsequences. The $\langle \mu,\lambda \rangle$ represents the dot product operation of two vectors $\mu$ and $\lambda$.

\subsection{APT Attribution}
\label{subsec:APTAtribution}
For the attribution in CAPTAIN, we use our proposed similarity measure to attribute the APT group responsible for a given attack campaign. Let $\mathcal{T}^{(g)}$ represent a sample of all TTP sequences belonging to group $g$, and let there be a total of $G$ different groups whose TTP sequences are present in the database. The CAPTAIN model employs the set of steps to attribute an attack campaign $A$ to a specific group within $1$ to $G$. The algorithm~\ref{alg:attribution} provides a pseudo-code representation of the proposed APT attribution method, and the steps are shown in Fig.~\ref{fig:CaptainPatternMatch}. The overall steps to perform APT attribution are the following:

\begin{algorithm*}[!ht]
\caption{APT Attribution algorithm}\label{alg:attribution}
\textbf{Input:} TTP sequences $(seq_A)$ \Comment{TTP sequence observed in given attack campaign $A$} \\
\textbf{Output:} Attributed APT group $(\hat{g})$
\begin{algorithmic}[1]
\Require Total number of groups $G$, captured TTP sequences $\mathcal{T}^{(g)}$ for every groups $g=1\; to\; G$
\State $m \gets \textnormal{length of } seq_A$
\For{$g=1$ to $G$}
    \State $SIZE[g] \gets \textnormal{number of TTP sequences in }g^{th}$ group 
\EndFor
\State $AVG\_SCORES \gets \textnormal{array of length } G$ \Comment{Initialize empty array to store attribution score for each group}
\For{$g=1$ to $G$}
    \State $SCORES \gets \textnormal{array of length } SIZE[g]$ \Comment{Initialize empty array to store similarity score within group $g$}
    \For{$i=1$ in $SIZE[g]$}
        \State $n \gets \textnormal{length of }\mathcal{T}^{(g)}_i$
        \State $\mu,\lambda \gets \Call{CSS}{seq_A, \mathcal{T}^{(g)}_i}$ \Comment{Calculate $\mu$ and $\lambda$ using $CSS$ function}
        \State $SCORES[i] \gets \frac{2\langle \mu,\lambda \rangle} {m2^{m-1}+n2^{n-1}}$ \Comment{Calculate similarity score from equation~\ref{eq:1}}
    \EndFor
    \State $AVG\_SCORES[g] \gets \frac{\sum SCORES}{SIZE[g]}$ \Comment{Average of similarity score within groups using equation~\ref{eq:2}}
\EndFor
\State \textbf{Return} $MAX(AVG\_SCORES)$ \Comment{Attribute to the group having maximum similar behaviour}
\vspace{0.2cm}
\Function{CSS}{$seq_A,seq_B$}
    \State $cs \gets $ set of all common subsequences present in $seq_A$ and $seq_B$
    \State $\lambda \gets \textnormal{ empty array}$
    \For{each $subseq$ in $cs$}
        \State $l \gets len(subseq)$
        \If{$l \notin \lambda$}
            \State $\lambda.append(l)$ \Comment{Length vector for elements of common subsequences}
        \EndIf
    \EndFor
    \State $k \gets \textnormal{ length of } \lambda$
    \State $\mu \gets \textnormal{ zero vector of length } k$
    \For{$i=1$ to $k$}
        \State $l\gets \lambda[i]$
       \State $count \gets 0$
       \For{each $subseq$ in $cs$}
            \If{$l==len(subseq)$}
                \State $count \gets count+1$
            \EndIf
       \EndFor
        \State $\mu[i] \gets count$ \Comment{Frequency vector for elements of common subsequences}
    \EndFor
    \State \textbf{Return} $\mu,\lambda$
\EndFunction

\end{algorithmic}
\end{algorithm*}

\subsubsection{TTP Extraction}
\label{subsubsec:ttpextration}
This is the first step of the attribution method to collect the set of TTPs used in the attack campaign $A$. 
The incident response team of security firms analyses a range of various artifacts to understand the modus operandi used by threat actors. They structure them in terms of TTPs to list the set of TTPs observed. As we leverage the threat reports in this experiment, we extract the set of TTPs observed in the attack campaign $A$ as explained in section~\ref{subsubsec:TTPExtraction}.

\subsubsection{TTP Sequencing}
\label{subsubsec:ttpseq}
After the set of TTPs is observed, CAPTAIN prepares the sequence of TTPs. The TTPs sequence ($seq_A$) is prepared using the TTP sequencing framework explained in section~\ref{subsubsec:TTPsequence} as shown at step {\Large \textcircled{\normalsize 1}} in Fig.~\ref{fig:CaptainPatternMatch}. 

\subsubsection{Common Sub-Sequence (CSS) preparation}
\label{subsubsec:commonsubseq}
Once CAPTAIN obtains $seq_A$ for a given sample, it extracts all possible subsequences. It finds a common subsequence set (borrowed from~\cite{Wang2007A}) between $seq_A$ and sample present in the database ($\mathcal{T}_{i}^{(g)}$) as shown at step {\Large \textcircled{\normalsize 2}} in Fig.~\ref{fig:CaptainPatternMatch}. Where $\mathcal{T}_{i}^{(g)}$ represents $i^{th}$ sample of $g^{th}$ group.
Further, CAPTAIN calculates the length of a common subsequence (CSS) of a certain length; if any CSS exists, it stores the length in the $\lambda$ vector. Also, it computes frequency of all such length CSS and stores in frequency vector $\mu$ corresponding to the $\lambda$.

\begin{figure*}
    \centering
    \includegraphics[width=15cm, height=10cm]{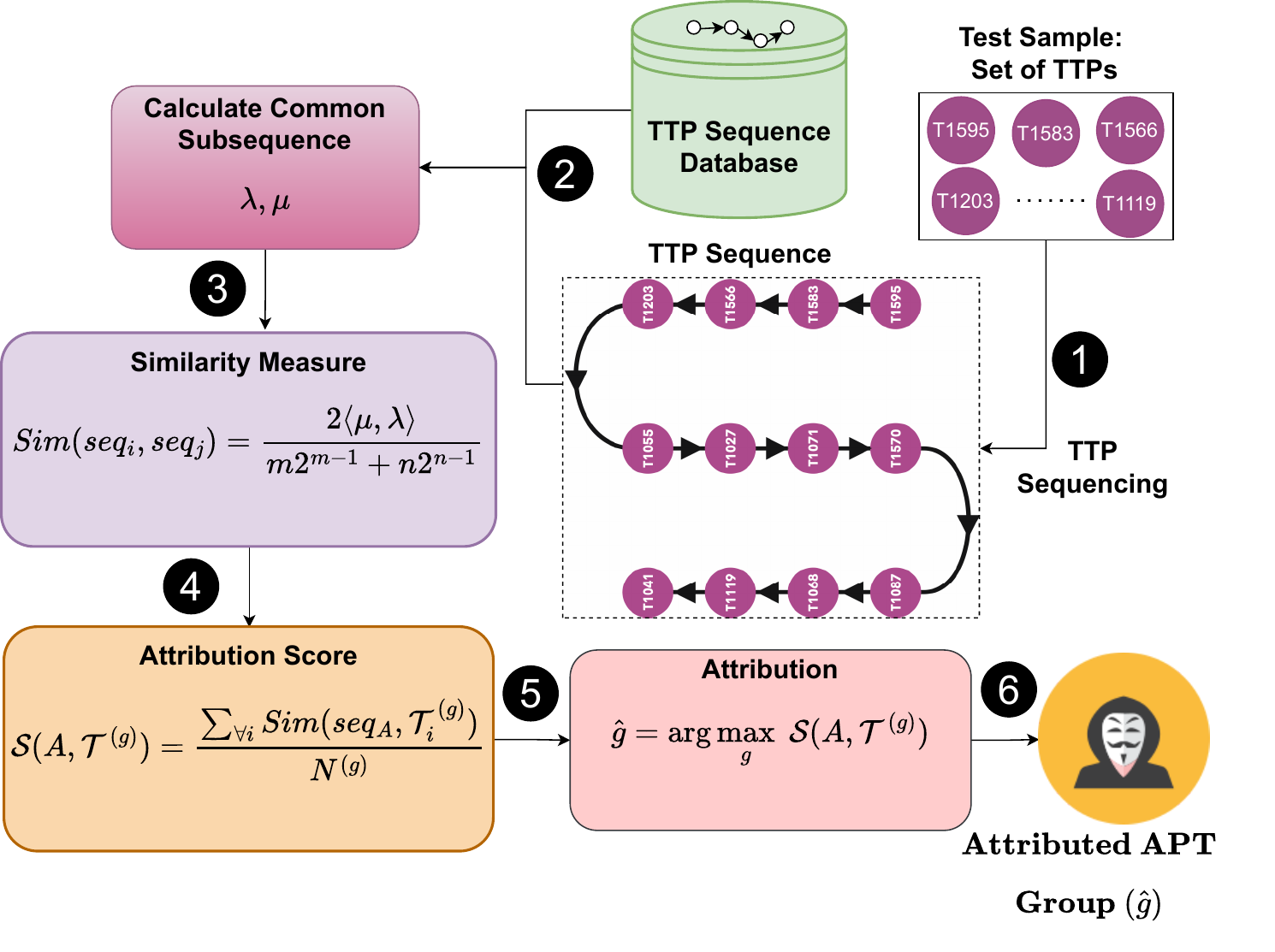}
    \caption{CAPTAIN: Pattern Matching}
    \label{fig:CaptainPatternMatch}
\end{figure*}

\subsubsection{Attribution Score}
\label{subsubsec:atributionScore}
Now, CAPTAIN feeds the $\mu$ and $\lambda$ to our proposed similarity measure (step {\Large \textcircled{\normalsize 3}} in Fig.~\ref{fig:CaptainPatternMatch}), present in section~\ref{subsec:simialrityindex}, and compute the attribution score ($\mathcal{S}$) of attack campaign $A$ for APT group $g$ (step {\Large \textcircled{\normalsize 4}} in Fig.~\ref{fig:CaptainPatternMatch}) as below:

\begin{equation}
\label{eq:2}
    \boxed{
    \mathcal{S}(A,\mathcal{T}^{(g)}) = \frac{\sum_{\forall i} Sim(seq_A,\mathcal{T}^{(g)}_i)}{N^{(g)}}
    }
\end{equation}
Where, $\mathcal{T}^{(g)}_i\in \mathcal{T}^{(g)}$, $N^{(g)}$ is number of TTP sequences in $\mathcal{T}^{(g)}$, and $seq_A$ is TTP sequence of target attack campaign $A$.

\subsubsection{Attribution}
\label{subsubssec:atribution}
Let we have TTP sequences for $G$ groups $(\mathcal{T}^{(1)},\mathcal{T}^{(2)},\cdots,\mathcal{T}^{(G)})$ in the database. Once we compute the attribution score ($\mathcal{S}$) for all known APT groups, the group having the maximum score (step {\Large \textcircled{\normalsize 5}} in Fig.~\ref{fig:CaptainPatternMatch})) shows high confidence in getting attributed. Therefore, CAPTAIN attributes the group ($\hat{g}$), shown at step {\Large \textcircled{\normalsize 6}} in Fig.~\ref{fig:CaptainPatternMatch}, which has maximum score as below:

\begin{equation}\label{eq:ettribution}
\boxed{
\hat{g} = \arg\max_{g}\;\mathcal{S}(A,\mathcal{T}^{(g)})
}
\end{equation}

\subsection{Intuition and Properties}
\label{subsec:IntuttionProof}
The proposed similarity measure satisfies non-negativity, boundedness, symmetry, and sensitivity to difference properties. The non-negativity properties show that the measure is always a non-negative value; boundedness represents the similarity measure between the lower bound as $0$ and the upper bound as $1$. The symmetry property represents that $Sim(seq_a,seq_b)$ is the same for $Sim(seq_b,seq_a)$ and sensitivity to difference illustrates how the similarity measure gets affected if the TTP of common subsequence changes to uncommon or common TTPs. We discuss four possible cases to demonstrate this property for the proposed similarity measure.
The proof of these properties is present in~\ref{sec:properties}.

We develop the proposed similarity measure by incorporating the insights from the Longest Common Subsequence (LCS)~\cite{Hirschberg1977Algorithms} and Gelstat Pattern Matching~\cite{Ratcliff1988Pattern} algorithms. The LCS algorithm contributes to calculating subsequences in a given input sequence, while the Gelstat method enables similarity measurement. By leveraging the subsequence analysis from LCS and the pattern-matching approach from Gelstat, we enhance the effectiveness of our similarity measure in identifying and matching patterns.
Identifying similarities in an attacker's behavior between two attack campaigns relies on extracting properties from common subsequences, which aids in recognizing behaviour patterns. In this context, we identify the key properties, including the length of matched patterns and their frequency within the sequence.
\subsubsection{Derivation of similarity measure}
\label{subsec:Derivation}
Once we obtain a common subsequence between all subsequences of $seq_i$ and $seq_j$, we calculate the contribution of each common subsequence in the similarity measure. The proposed similarity measure is based on \textbf{how long} and \textbf{how many} common subsequences are present. Therefore, we multiply the length of each common subsequence $(\lambda_i)$ with the frequency of such length patterns $(\mu_i)$ to calculate the weightage of each subsequence in the similarity matching as follows.
\begin{align*}
    Sim'(seq_i,seq_j) &= \sum_{i=0}^{len(\lambda)} \lambda_i \mu_i\\
    &= \langle \mu,\lambda \rangle
\end{align*}

A similarity measure is a measure that is relative to the total weightage of the highest possible similarity index. Therefore, we divide the above index with a measure of the total weight of the CSS similarity index, including $seq_i$ and $seq_j$. Further, we multiply the above weight with $2$ because the common subsequence is present in both given input sequences and aids in normalizing the measure~\cite{Ratcliff1988Pattern}. Now, the above formula becomes as follows:

\begin{equation}
    \label{eq:4}
    Sim(seq_i,seq_j) = \frac{2\langle \mu,\lambda \rangle}{\sum_{i=0}^{i=m} i \binom{m}{i} + \sum_{i=0}^{i=n} i \binom{n}{i}}
\end{equation}
In the denominator, $\binom{n}{i}$ represents the total possible subsequences of length $i$ in a given $n$-length sequence. We further simplify the above equation in the compact form using claim~\ref{claim1}. The following equation is a simplified form of the similarity measure, also mentioned in equation~\ref{eq:1}.

\begin{equation*}
    \boxed{
    Sim(seq_i,seq_j) = \frac{2\langle \mu,\lambda \rangle} {m2^{m-1}+n2^{n-1}}
    }
\end{equation*}

\begin{claim}
\label{claim1}
The following formula holds true:
$$\sum_{i=0}^{i=n} i \binom{n}{i} = n2^{n-1}$$
\end{claim}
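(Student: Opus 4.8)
The identity $\sum_{i=0}^{n} i\binom{n}{i} = n2^{n-1}$ is a classical binomial sum, and the cleanest route is a combinatorial differentiation argument applied to the binomial theorem. The plan is to start from $(1+x)^n = \sum_{i=0}^{n}\binom{n}{i}x^i$, differentiate both sides with respect to $x$ to obtain $n(1+x)^{n-1} = \sum_{i=0}^{n} i\binom{n}{i}x^{i-1}$, and then substitute $x=1$. The left side becomes $n2^{n-1}$ and the right side becomes $\sum_{i=0}^{n} i\binom{n}{i}$, which is exactly the claim. The $i=0$ term contributes nothing, so it is harmless to keep the summation index starting at $0$ as written in the statement.

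Alternatively, if a more elementary (calculus-free) argument is preferred to keep the paper self-contained, I would use the absorption identity $i\binom{n}{i} = n\binom{n-1}{i-1}$, valid for $i \geq 1$. Then
\begin{equation*}
\sum_{i=0}^{n} i\binom{n}{i} = \sum_{i=1}^{n} n\binom{n-1}{i-1} = n\sum_{j=0}^{n-1}\binom{n-1}{j} = n2^{n-1},
\end{equation*}
where the last step uses the standard fact that the $(n-1)$-th row of Pascal's triangle sums to $2^{n-1}$. A third option, worth mentioning as a sanity check, is the double-counting interpretation: both sides count the number of ways to choose a committee (of any size) from $n$ people together with a designated chairperson — the left side by first fixing the committee size $i$, the right side by first picking the chair ($n$ ways) and then an arbitrary subset of the remaining $n-1$ people ($2^{n-1}$ ways).

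I would present the absorption-identity version as the main proof since it is short, purely algebraic, and matches the combinatorial flavor of the surrounding discussion (the denominator of Equation~\ref{eq:4} was itself derived by counting subsequences of each length). There is no real obstacle here: the only minor point to be careful about is the index shift when pulling the $i=0$ term out and relabeling $j = i-1$, and confirming that $\binom{n-1}{i-1}$ is well-defined for the range $1 \leq i \leq n$ (it is). Once the claim is established, substituting it into the denominator of Equation~\ref{eq:4} immediately yields the boxed compact form in Equation~\ref{eq:1}, completing the derivation of the similarity measure.
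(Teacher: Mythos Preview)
Your proposal is correct on all three fronts. Note, however, that the paper itself adopts your \emph{first} option --- differentiating the binomial expansion $(1+x)^n=\sum_{i=0}^{n}\binom{n}{i}x^i$ and evaluating at $x=1$ --- rather than the absorption identity you single out as your preferred main proof. Both arguments are equally short and standard; the differentiation route is slightly more mechanical (one line of calculus, no index shift to track), while your absorption-identity version stays purely combinatorial and arguably meshes better with the subsequence-counting narrative around Equation~\ref{eq:4}. Either would serve the paper equally well, and your double-counting remark is a nice conceptual check that neither the paper nor the absorption proof makes explicit.
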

\begin{proof}\renewcommand{\qedsymbol}{}
Consider the binomial theorem:
\begin{align*}
    \sum_{i=0}^{n} x^i\binom{n}{i} &= (1 + x)^n\\
    \frac{d}{dx}\sum_{i=0}^{n} x^i\binom{n}{i} &= \frac{d}{dx}(1 + x)^n && \text{\hspace{0.2cm}differentiation wrt. $x$}\\
    \sum_{i=0}^{n} ix^{i-1}\binom{n}{i} &= n(1 + x)^{n-1}\\
    \sum_{i=0}^{i=n} i \binom{n}{i} &= n2^{n-1} && \text{\hspace{0.2cm}replace $x=1$}
\end{align*}
\end{proof}

\subsection{Computation Complexity}
\label{subsec:complexity}
We implement CAPTAIN using the Algorithm~\ref{alg:attribution}. The main computation cost of CAPTAIN lies in two loops which calculates all pair similarity scores (line 6 and line 8 in Algorithm~\ref{alg:attribution}). The total iteration of both loops is the number of TTP sequences for every group. Let us assume the baseline dataset has a total $N$ number of TTP sequences. Then, the computation cost of CAPTAIN is $N$ times the computation cost of the body of the loops. In the iteration body, the two major computations are computing common subsequences 
(line 10 in Algorithm~\ref{alg:attribution}) 
and the dot product of $\mu$ and $\lambda$ vectors (line 11 in Algorithm~\ref{alg:attribution} ). Getting a set of common subsequences (CSS) dominates the computation cost. The complexity of function $CSS(seq_i,seq_j)$ is $\mathcal{O}(m\times n)$~\cite{Wang2007A,ACSGitHub}, where $m$  and $n$ are the lengths of $seq_i$ and $seq_j$, respectively. In the case of the dot product, the maximum length of $\mu$ and $\lambda$ vectors can be $MIN(m,n)$. The complexity of performing dot product is $\mathcal{O}(MIN(m,n)^2)$. Now, we can combine all the above-discussed complexities as $\mathcal{O}(N\times (\mathcal{O}(m\times n)+ \mathcal{O}(MIN(m,n)^2)))=\mathcal{O}(N\times m\times n)$.

In the proposed APT attribution method, we prepare TTP sequences for the $18$ attack stages of the UKC. Based on domain knowledge, mostly one or two techniques are used to perform a tactic. Also, the attacker does not need to perform all the $18$ stages to execute an attack~\cite{Pols2017The}.
Based on the above discussion, we can say that a TTP sequence is limited in size. Also, in our dataset of $580$ TTP sequences (shown in Fig.~\ref{fig:TTPNumberDist}), we find that the average length of a TTP sequence is $12.22$ (minimum=4, maximum=34, and std=6.33). Therefore, we can consider that the value of $m$ and $n$ are constant and $\mathcal{O}(N\times m\times n)=\mathcal{O}(N)$, which leads to the time complexity of CAPTAIN is $\mathcal{O}(N)$.

CAPTAIN needs to store every group's TTP sequence while attributing a TTP sequence that requires $\mathcal{O}(N)$ space complexity. As discussed, other variables used during the computation process are smaller, temporary, or constant. Therefore, the space complexity of CAPTAIN is $\mathcal{O}(N)$.
\begin{figure}[!ht]
    \centering
    \includegraphics[width=\columnwidth]{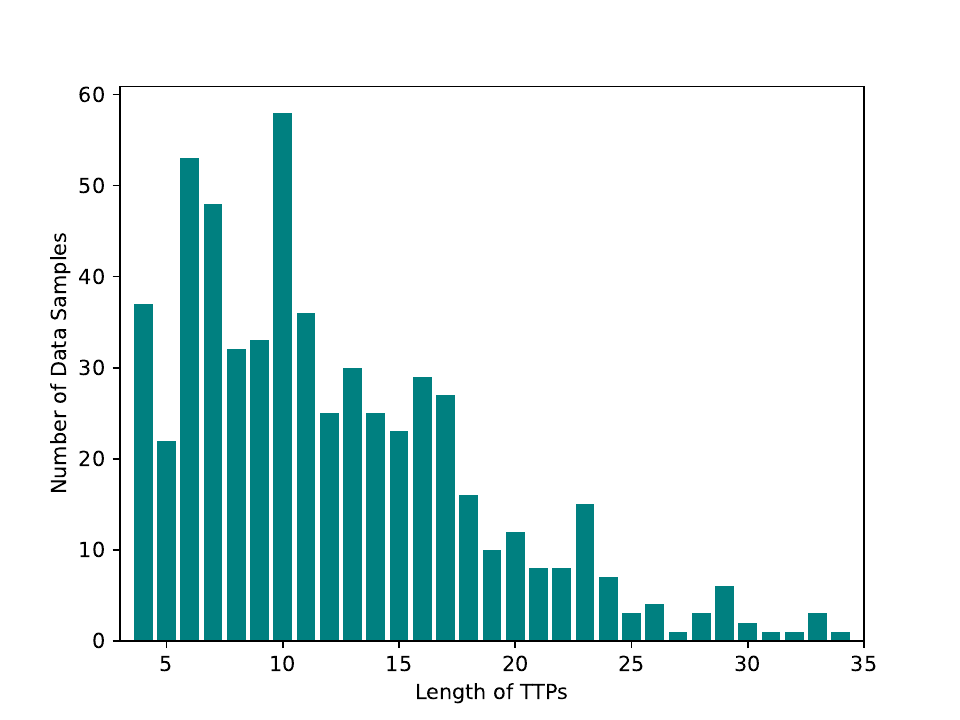}
    \caption{Number of data samples consisting various length of TTP sequence in TTP-APT dataset. Mean = $12.22$ and std = $6.33$}
    \label{fig:TTPNumberDist}
\end{figure}

\section{Experiments \& Results}
\label{sec:ExperimentResult}

We perform experiments to implement and evaluate proposed method CAPTAIN. We compare CAPTAIN's performance with available literature and traditional similarity measures. The experiment and evaluation aim to employ reasoning over obtained results, and this section explains them in detail.

\subsection{Performance Metrics}
\label{subsec:PerformanceMetrics}
We assess the performance of implemented methods using macro metrics, i.e., macro precision, macro recall, and macro f1-score. Macro metrics provide equal weights to every threat group class while averaging overall performance metrics, which prevents any biasness introduced by majority target class~\cite{grandini2020metrics}.
In the context of attributing cyber threats, the goal is to get better at attributing correctly linked threat groups. Therefore, better attribution methods should focus on minimizing false positives over false negatives, i.e., improving precision~\cite{goutte2005probabilistic}. Emphasizing precision helps to make confident and informed decisions, ensuring that resources are directed appropriately based on precise attributions. Consequently, we consider precision a high-priority performance metric while comparing attribution performance.

\subsection{Dataset Collection}
\label{subsec:Dataset}

We collect $580$ threat reports published by various prominent security firms~\footnote{From various security firm's blogs: \href{https://www.zscaler.com}{Zscaler Inc.}, \href{https://www.checkpoint.com}{Check Point Software Technologies Ltd.}, \href{https://www.forcepoint.com}{Forcepoint}, \href{https://www.pwc.com}{PWC}, \href{https://www.proofpoint.com}{Proofpoint Inc.}, \href{https://www.clearskysec.com}{Clearskysec}, \href{https://www.crowdstrike.com}{CrowdStrike Holdings Inc.}, \href{https://www.trendmicro.com/}{Trend Micro Inc.}, \href{https://symantec-enterprise-blogs.security.com/blogs/}{Symantec (Now part of Gen Digital Inc.)}, \href{https://www.kaspersky.com}{Kaspersky Lab}, \href{https://www.paloaltonetworks.com}{Palo Alto Networks Inc.}, \href{https://blogs.cisco.com/}{Cisco Systems Inc.}, \href{https://www.welivesecurity.com}{WeLiveSecurity}, \href{https://www.trellix.com/en-us/index.html}{Trellix}} to prepare a TTP-based attribution dataset. We employ our tool TTPXHunter, an extended version of our previous tool named TTPHunter~\cite{Rani2023TTPHunter}, to extract TTPs (step {\Large \textcircled{\normalsize 4}} in Fig.~\ref{fig:CaptainBaseline}) for attack campaigns from collected threat reports.
Our dataset consists of $580$ TTP sequences for $11$ APT groups. The prepared dataset contains seven columns: Year (in which year the report was published), TTPs (list of TTPs observed in the attack campaign), APT Group (attributed APT group name), and group ID (ID given by MITRE~\footnote{\url{https://attack.mitre.org/groups/}}), Group Aliases (given by MITRE), File Name (name of the threat report), and Report link (webpage address of threat report). On average, the number of samples in our dataset is $52$. The group $APT17$ consists of the highest samples at $94$, and the group $APT3$ consists of the lowest samples at $9$.
The distribution of data samples of each group is shown in Table~\ref{tab:ttpaptdistribution} and a glimpse of the dataset is shown at step {\Large \textcircled{\normalsize 5}} in Fig~\ref{fig:CaptainBaseline}.

\begin{table}[!ht]
    \centering
    \caption{Distribution of TTP-based Attribution Dataset}
    \begin{threeparttable}

\begin{tabular}{ccc}
    \hline
    \textbf{Group ID\tnote{*}} & \textbf{APT Group} & \textbf{No. of Samples} \\
    \hline
    G0049 & OilRig & $69$ \\
    G0025 & APT17 & $94$ \\
    G0032 & Lazarus Group & $37$ \\
    G0010 & Turla & $75$ \\
    G0096 & APT41 & $25$ \\
    G0130 & RocketKitten & $12$ \\
    G0046 & FIN7 & $81$ \\
    G0016 & APT29 & $88$ \\
    G0022 & APT3 & $09$ \\
    G0007 & APT28 & $47$ \\
    G0045 & APT10 & $43$ \\
    \hline
\end{tabular}
   \begin{tablenotes}
       \item[*] APT group id given by MITRE 
   \end{tablenotes}
   \end{threeparttable}
    \label{tab:ttpaptdistribution}
\end{table}


\subsection{CAPTAIN Implementation}
\label{subsec:CaptainImplem}
The first step to implment CAPTAIN is to establish a baseline database. To establish the baseline database, we select $435$ samples out of $580$ samples present in the dataset collection. We prepare the TTP sequence of each TTP sample present in the database by following the method discussed in Section~\ref{subsubsec:TTPsequence}. We grouped the TTP sequences based on their threat group name label. The prepared group-wise sequence database is an established baseline of TTP sequences.

Once we established the database, we considered the remaining samples, i.e., $145$ samples, as an evaluation set to evaluate the performance of implemented methods. We prepare TTP sequences of evaluation set samples using the method discussed in Section~\ref{subsubsec:TTPsequence}. After preparing evaluation set TTP sequences, we compare them with the established database using the proposed similarity measure discussed in Section~\ref{subsec:simialrityindex}. We follow Algorithm~\ref{alg:attribution} for each comparison between the evaluation set and the established database and later steps to perform the attribution. Once attribution is performed for each sample, we evaluate the performance of CAPTAIN using metrics discussed in Section~\ref{subsec:PerformanceMetrics}. The CAPTAIN obtained a precision of $61.36\%$, and the result is shown in Table~\ref{tab:attributionresult}.

\begin{table}[ht]
    \centering
    \caption{Attribution Performance Comparison (in \%)}
    \label{tab:attributionresult}
    \begin{tabular}{ccccc}
    \hline
     \textbf{Method} & \textbf{Accuracy} & \textbf{Precision} & \textbf{Recall} & \textbf{F1-score} \\
     \hline
     Cosine & $33.10$ & $58.17$ & $43.23$ & $34.49$ \\
     Euclidean & $40.0$ & $57.23$ & $53.65$ & $38.73$ \\
     LCS & $33.10$ & $56.63$ & $44.25$ & $34.30$ \\
     \textbf{CAPTAIN} & \textbf{55.17} & \textbf{61.36} & \textbf{64.74} & \textbf{52.72} \\
     \hline
\end{tabular}
\end{table}

\subsection{Comparison with Traditional Similarity Measures}
\label{subsec:CompareTraditionalSim}
In this experiment, we aim to compare our proposed similarity measure with traditional similarity measure to assess its effectiveness in performing attribution. We compare attribution performance by replacing our proposed similarity measure in CAPTAIN with traditional similarity measures. We consider three traditional measures for this experiment: Cosine, Euclidean, and Longest Common Subsequence (LCS). Cosine and Euclidean measures distinguish threat actor's behaviour based on mere presence on TTPs, Whereas LCS leverages sequence property to determine similarity index. The LCS similarity method works by comparing the elements of the two sequences and finding the longest common subsequence~\cite{Bakkelund2009An}.

We leverage CAPTAIN's established baseline, evaluation dataset, and performance metrics to implement and compare experiments on the common ground. To add the presence or absence of TTPs in the given data sample, we leverage the one-hot encoding method and prepare an input feature vector for the Cosine and Euclidean similarity measure. Next, the encoded input vector is given to Cosine and Euclidean measure to compare, assign attribution score and attribute the most likely threat group similar to CAPTAIN implementation discussed in Section~\ref{subsec:CaptainImplem}. We individually implement these three traditional similarity measures with CAPTAIN to perform attribution and evaluate their corresponding performance. We calculate attribution performance based on all these three traditional measures, and the results are shown in Table~\ref{tab:attributionresult}. As precision is more valuable than other performance metrics, we consider the precision score as the primary comparison metric. It is noticeable that CAPTAIN, based on our proposed similarity measure, performs better than traditional similarity measures. Our proposed method, CAPTAIN, provides the highest precise attribution model with a $61.36\%$ score. Cosine demonstrates better results among all other traditional similarity measure-based implemented methods, i.e., $58.17\%$ precision score, but is low compared to the proposed CAPTAIN method. It demonstrates that the properties, including lengths of pattern sequences and their corresponding frequencies, significantly enhance the capability to identify similarities in threat actor's behaviour.


\subsection{Comparison with Literature}
In our literature review, we find Noor et al.~\cite{Noor2019A}, which presents a method for TTP-based APT attribution. This method is based on only the existence of TTPs and implements various ML models such as KNN (K-Nearest Neighbors), Decision tree, Random Forest, Naive Bias, and DLNN (Deep Learning Neural Network). They evaluate all implemented models and find that DLNN outperforms other ML models. Therefore, we implement DLNN to assess the performance of their model for our dataset in this experiment. We consider established database as training set and evaluation set as test set to compare it with CAPTAIN on common ground. The implemented DLNN consists of 2 layers of 64 neurons each. The hidden layer relies on relu activation functions, and the classification layer consists of a softmax activation function.

A comparison between CAPTAIN and the implemented method is shown in Fig.~\ref{fig:CaptainNoor}. 
We can see that the literature which is a training-based method and rely on mere presence of TTPs is ineffective in identifying patterns of attacker behaviour for attribution. 
\begin{figure}[h]
    \centering
    \includegraphics[width=\columnwidth]{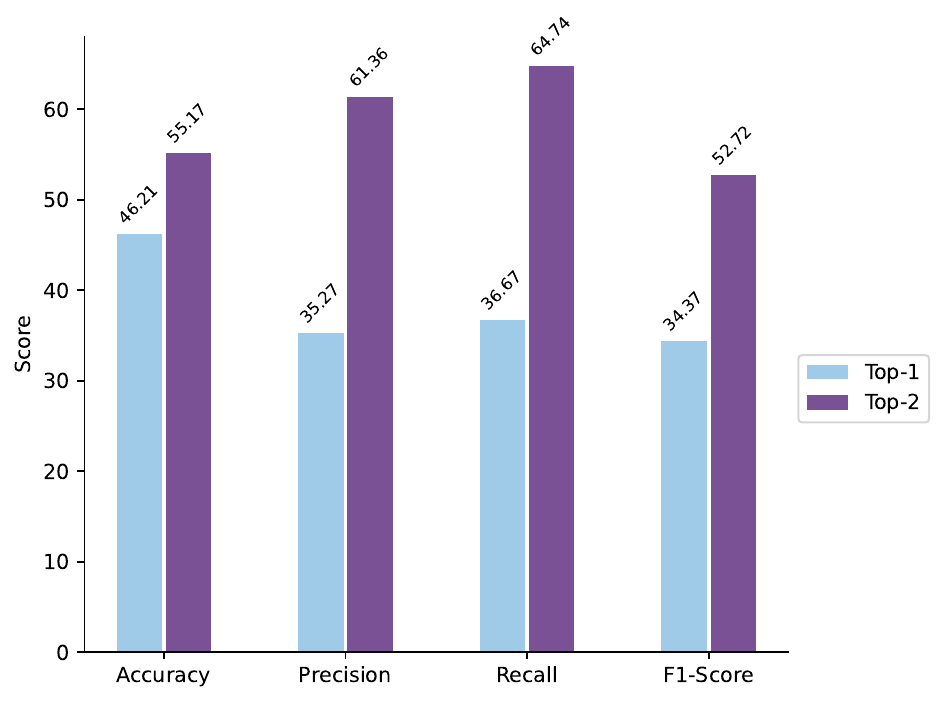}
    \caption{Performance Comparison between CAPTAIN and Noor et al.~\cite{Noor2019A}}
    \label{fig:CaptainNoor}
\end{figure}
The reason can be that the model could not understand the patterns because of limited data per threat group. As discussed in Section~\ref{sec:introduction}, getting a substantial number of attack campaigns for any threat group is challenging. The result in Fig.~\ref{fig:CaptainNoor} demonstrates that pattern matching-based our proposed method CAPTAIN is performing well in such cases. 
Overall, the results highlight the effectiveness of CAPTAIN in accurately attributing attacker behaviour patterns, demonstrating its superiority over traditional approaches.

\subsection{Attribution Certainty}
Attribution cannot always be certain because of its probabilistic nature and false flags planted by sophisticated actors~\cite{Steffens2020Attribution,lindsay2015tipping,Skopik2020Under}. 
Also, given the inherent challenge of limited data in the attribution problem, we adopt a relaxed evaluation approach inspired by~\cite{Rosenblum2011Who,Almasian2022Bert,Chen2019Chained}. 
According to the relaxed measurement, it is a good practice to narrow down the probably responsible actor list rather than attribute it to only one group. 
By considering probable predictions, we gain a broader understanding of the model's attribution capabilities and ability to identify potential APT groups involved in the attack.
Therefore, we evaluate the proposed model's performance using the top-2 method to narrow down attribution to two possible threat groups. Further, based on other technical and non-technical artifacts, an analyst can pin down the most aligned threat group as an attributed group. The main objective of the top-2 method is to narrow down the analysis range from many to two and minimize the manual analysis burden from the analyst's shoulder. In the top-2 approach, we consider attribution correct if the true value is present in the topmost two attributed groups; otherwise, it is false. The performance of the top-2 experiment over top-1 is shown in Fig.~\ref{fig:Top1Top2Performance}.
\begin{figure}[h]
    \centering
    \includegraphics[width=\columnwidth]{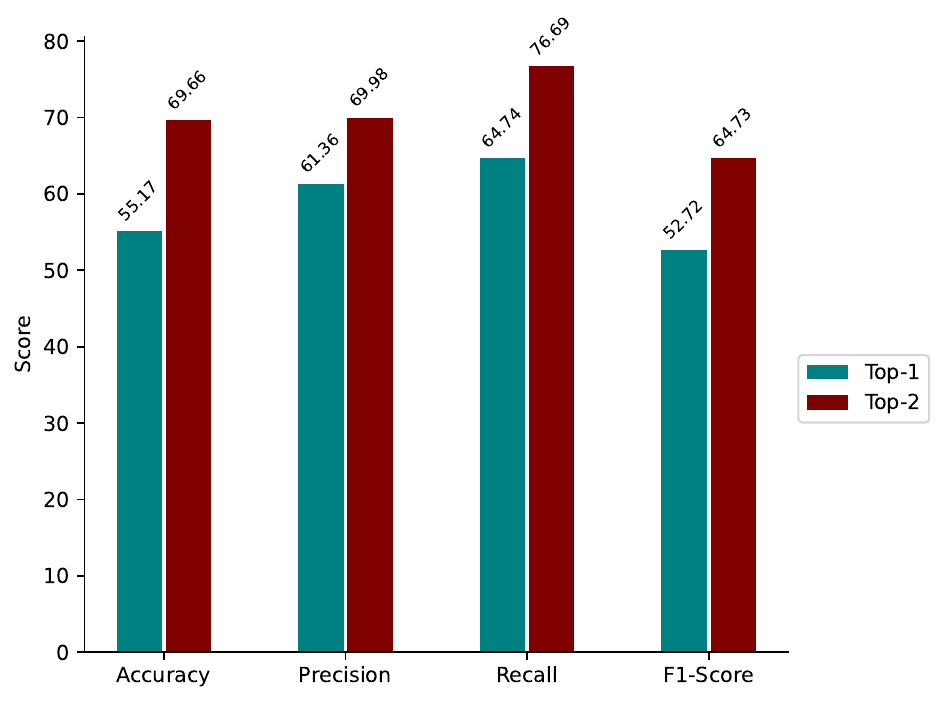}
    \caption{Top-1 and Top-2 Performance}
    \label{fig:Top1Top2Performance}
\end{figure}
We can see significant improvement in performance after introducing a window of two to narrow down the probably responsible threat group. Further, we extend the top-2 experiment to top-n to analyze the better value for $n$ if one wants a better precise model. Therefore, we evaluate the precision score for each possible value of $n$, i.e., $1$ to $n$, and the result is shown in Fig.~\ref{fig:PrecisionVsTopn}. Taking the top-3 method, we observe that the attribution reaches a precision of $74.24\%$, and it further improves with the expansion of the top-n window.
Further, an analyst can use other indicators or artifacts used in the operation to identify the attacker out of the list of narrowed threat actors. The narrowed list enables analysts to identify which threat group they may be dealing with. Having a narrowed list of probable threat groups significantly reduces the analyst's effort, as they can now focus on a limited number of threat groups instead of a wide variety.
\begin{figure}[!ht]
    \centering
    \includegraphics[width=\columnwidth]{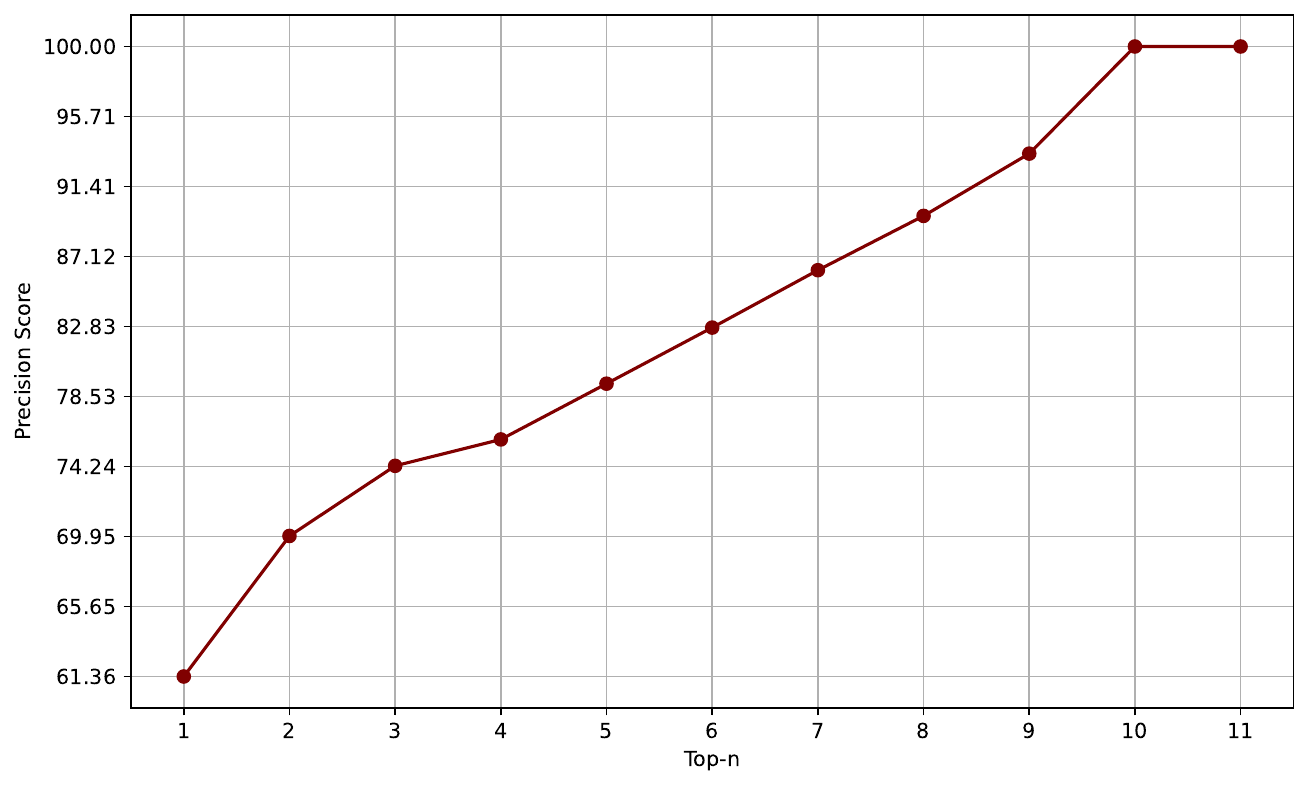}
    \caption{Precision vs Top-n Analysis}
    \label{fig:PrecisionVsTopn}
\end{figure}

\subsection{Sequence Correlation}
\label{subsec:SeqCorr}
We analyze the correlation between the APT groups present in the dataset by using our proposed similarity measure. This analysis aims to validate the TTP sequencing. We obtained comprehensive insights by calculating the similarity measure for all possible pairs in our dataset and then averaging the intra- and inter-group scores. The results are depicted in a heatmap in Fig.~\ref{fig:SequenceCorrelartionHeatmap}. Interestingly, we observe a relatively high correlation between the same-linked country APT17 and APT41 groups~\cite{MITREgroup}, which shows that the sequence between both groups is comparatively correlated. One recent research~\cite{secalliance2023} demonstrates that these groups are linked, which can be the reason for the higher correlation between both group's modus operandi. Beyond this observation, we noted a stronger correlation among intra-groups than inter-group samples. This pattern reflects that the UKC-based sequencing of modus operandi introduces contextual depth to the sequence and enriches the understanding of the group's operational behaviors.
\begin{figure}[ht]
    \centering
    \includegraphics[width=\columnwidth, height=6.5cm]{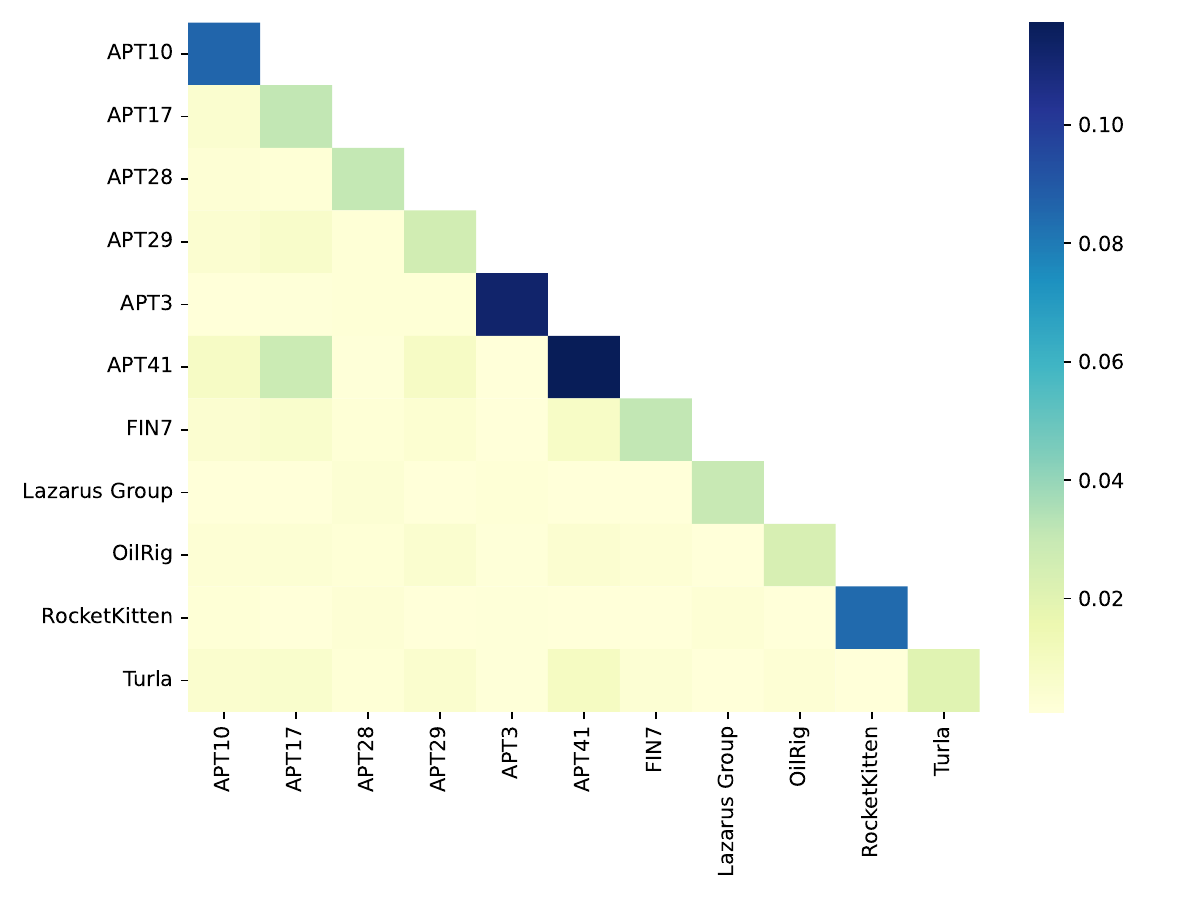}
    \caption{Correlation Measure between APT Groups}
    \label{fig:SequenceCorrelartionHeatmap}
\end{figure}
\\
\\
We find that TTPs are valuable for distinguishing between different attacker's behaviours based on the techniques employed at specific attack stages. The performance of CAPTAIN supports the valuation of TTPs in attribution, which starts with the foundational concept of the Pyramid of Pain and demonstrates the importance of TTPs in identifying attacker's repetitive behaviour.
In addition to the presence of TTPs, we observe that merely the presence of TTPs contributes less than the properties of TTP sequences added to our similarity measure.
The propsoed similarity measure considers the length of matched behaviour patterns (i.e., the length of matched TTP subsequences) and the frequency of such patterns within an attack campaign. By incorporating these factors, we establish connections between the behaviour patterns and the responsible entity.
Further, the result of literature shows that the training-based and mere presence of TTPs-based method is inefficient in performing threat attribution compared to the proposed method CAPTAIN. 
Overall, our findings shed light on the significance of TTPs and the value of our proposed similarity measure in capturing and analyzing attacker behavior, thus contributing to cyber threat attribution.

\section{Challenges \& Limitations}
\label{sec:ChallengeLimitation}
The prevalence of false flag tactics in cyber attacks, where attackers deliberately mislead about their identity, further complicates accurate attribution. Compounding the issue of the evolving nature of the attacker's modus operandi demands continuous updates and adaptations in our analysis methods. Furthermore, there can be an inherent data bias from relying on security firms to analyze and label attack campaigns. The geographical location and interests of these firms can influence this bias. Collectively, these factors underscore the complex and dynamic challenges inherent in attributing cyber threats and open doors for further research in this domain.

The sequencing may sometimes need manual intervention to ensure contextual behaviour sequencing, specifically in two cases: more than a TTP extracted for specific tactics, and the extracted TTP belongs to more than one tactic. For both cases, we manually read the reports and adjust the sequencing based on the way TTPs are executed by the threat actor. The ATT\&CK framework consists of a few TTPs in multiple tactics. We adjust such TTPs based on the tactics by which the threat actor uses the TTPs in the specific attack campaign. The number of such TTPs is small in comparison, i.e., $27$ out of $201$ TTPs in the current framework version. This manual intervention is minimal compared to framework-based analysis. The list of such TTPs is present in~\ref{sec:TTPs}.

\section{Conclusion}
\label{sec:conclusion}
This research finds the importance of TTPs in attributing APT groups and their characteristics that we need to take care of while performing attribution. The proposed framework CAPTAIN finds that the sequence of TTPs (how long) observed at different attack stages and their frequency (how frequent) are important characteristics that help in identifying APT group's characteristics. Our similarity measure can identify these properties and evaluate the attribution similarity measures for a given attack pattern in terms of TTPs. We also present a structured and curated TTP-based attribution dataset, which is a stepping stone for cybersecurity researchers to perform research in APT attribution.
The implemented experiments demonstrate how traditional similarity measures and available attribution literature struggle for the attribution task. The attribution performed using CAPTAIN outperforms all implemented attribution methods and state-of-the-art. Further, we look to identify if any more characteristics of TTPs found in the attack can help improve attribution and the efficacy of other attack patterns in performing attribution. 
APT attribution aids in the improvement of proactive defense against sophisticated cyber attacks, empowering the cybersecurity community to stay one step ahead. 
The CAPTAIN presents a foundational block for APT attribution using TTPs and their characteristics and can potentially provide a new direction for the research in APT attribution.



\section*{Conflicts of interest/Competing interests}
\textbf{Non-financial interests:} Author Sandeep Kumar Shukla is a member of the editorial board at the International Journal of Information Security. 
\\
\textbf{Financial interests:} This research is partially supported by C3i (Cyber Security and Cyber Security for Cyber-Physical Systems) Innovation Hub at IIT Kanpur. Nanda Rani received the Prime Minister Research Fellowship (PMRF), Ministry of Education, Government of India.

Apart from this, there is no other Competing Interest/Conflict of Interest.

\section*{Data availability}
The dataset utilized in this research is available at \url{https://github.com/nanda-rani/CAPTAIN}. Please note that this is just a preview of the dataset; the remaining part, along with the code, will be released following the acceptance of the research paper.

\begin{acknowledgements}
This work was partially supported by C3i (Cyber Security and Cyber Security for Cyber-Physical Systems) Innovation Hub at IIT Kanpur. Nanda Rani thanks the Prime Minister Research Fellowship (PMRF), Ministry of Education, Government of India, for the valuable financial support.
\end{acknowledgements}

\appendix

\section{Appendix}
\label{sec:appendix}
\subsection{Properties of similarity measure} 
\label{sec:properties}
The proposed similarity measure represents its effectiveness by exhibiting several properties as follows: 
\subsubsection{Non-negativity}
\label{subsubsec:NonNegativity}
This property represents that the similarity measure should always return a non-negative value for any pair of inputs, i.e., $Sim(seq_i,seq_j) \geq 0$.
In our similarity measure equation-\ref{eq:1}, $m$ and $n$ represents length of sequences i.e., $m,n>0$; therefore $m2^{m-1}+n2^{n-1}>0$. Also, every element of frequency and length vector are non-negative, i.e., $\lambda_i,\mu_i\geq 0\;\forall i$; therefore $\langle \mu,\lambda \rangle\geq 0$. When we combine both $m2^{m-1}+n2^{n-1}>0$ and $\langle \mu,\lambda \rangle\geq 0$, we conclude:

$$\boxed{Sim(seq_i,seq_j) \geq 0}$$ 

\subsubsection{Boundedness}
\label{subsubsec:Boundness}
The similarity measure must have a lower and upper bound, indicating the minimum and maximum possible similarity scores between TTP sequences.
    \begin{itemize}
        \item \textit{Lower Bound:} In the non-negative property, we demonstrate that $Sim(seq_a,seq_b) \geq 0$ means the lower bound can be a non-negative number. Suppose an extreme case when two sequences $seq_a$ and $seq_b$ have no common subsequence, then the vectors $\mu$ and $\lambda$ are zero vectors and $\langle \mu,\lambda \rangle = 0$. Combination of statements $Sim(seq_a,seq_b) \geq 0$ and $\langle \mu,\lambda \rangle = 0$ shows that the lowest value of $Sim(seq_a,seq_b)$ is zero. 
        \item \textit{Upper Bound:} 
        For any two sequences $seq_a$ and $seq_b$ of length $m$ and $n$, where $m\leq n$; $\binom{m}{i}$ represents the number of $i-length$ subsequences from $m-length$ sequence. While $\mu_i$ is the number of common subsequences of length $\lambda_i$, common in both sequences, i.e., $\mu_i\leq \binom{m}{i}$. Therefore, the following inequality is satisfied.
        \begin{equation}
        \label{eq:5}
            \sum_{i=0}^{len(\lambda)}\lambda_i \mu_i \leq \sum_{i=0}^{m} i \binom{m}{i}
        \end{equation}

       LHS can be maximized to equal RHS when every possible subsequence is common with $seq_b$, i.e., if $seq_a\subseteq seq_b$. Similarly for sequence $seq_b$, $\sum_{i=0}^{len(\lambda)}\lambda_i \mu_i$ is maximum for $seq_b\subseteq seq_a$. Therefore, the maximum similarity score can be achieved when $seq_a = seq_b$. We can write from eq.~\ref{eq:5} for $seq_a = seq_b$ as follows:
        \begin{align*}
            \sum_{i=0}^{len(\lambda)}\lambda_i \mu_i &= \sum_{i=0}^{m} i \binom{m}{i} = m2^{m-1}
        \end{align*}
\text{Similarly,}
        \begin{align*}
            \sum_{i=0}^{n} i \binom{n}{i} = n2^{n-1}
        \end{align*}
        Using the above relation, we can write the upper bound (maximum value) of a similarity measure as follows:
        
        \begin{align*}
            Sim(seq_a,seq_b) &= \frac{2\langle \mu,\lambda \rangle} {m2^{m-1}+n2^{n-1}}\\
            &= \frac{2\sum_{i=0}^{len(\lambda)}\lambda_i \mu_i} {m2^{m-1}+n2^{n-1}}\intertext{When $seq_a = seq_b \text{ i.e.}, n=m,$}
            &= \frac{2n2^{n-1}} {n2^{n-1}+n2^{n-1}}\\
            &=1
        \end{align*}
       

    \end{itemize}
    Based on the above discussion, we can conclude that the proposed similarity measure is bounded between zero and one, i.e., $0\leq Sim(seq_a,seq_b)\leq 1$.


\subsubsection{Symmetry}
\label{subsubsec:symmetry}
According to this property, the similarity measure should return the same value when comparing sequence $seq_i$ to sequence $seq_j$ as it does when comparing sequence $seq_j$ to sequence $seq_i$, i.e., $Sim(seq_i,seq_j)=Sim(seq_j,seq_i)$. We use the commutative property of addition to prove this property.
    From eq.\ref{eq:1}, the similarity measure of the two sequences is as follows:
    \begin{align*} 
        Sim(seq_i,seq_j) &= \frac{2\langle \mu,\lambda \rangle} {m2^{m-1}+n2^{n-1}} \intertext{By commutative property of addition,}
        &= \frac{2\langle \mu, \lambda \rangle} {n2^{n-1}+m2^{m-1}}\\
        &=Sim(seq_j,seq_i)
    \end{align*}

\subsubsection{Sensitivity to Differences}
\label{subsubsec:senstivity}
The similarity measure should differentiate similar and dissimilar sequences effectively. It should assign higher similarity scores to more similar sequences and lower scores to less similar sequences. It is a common practice to show the similarities and differences property based on the possible cases. So, we discuss the following four cases to show that the proposed similarity measure is sensitive to difference.
    \begin{itemize}
        \item \textit{Case-1:} When a common TTP between two sequences changes to an uncommon TTP. In this case, at least one common subsequence of a certain length gets reduced, which reduces the corresponding frequency element. For example, if the changed element is part of a maximum one-length CSS, then the element of frequency vector corresponding to the one length is reduced by one. If the changed element is involved in a maximum two-length CSS, then the frequency vector corresponding to one and two lengths is reduced by one. If the changed element contributed in $n>2$ length CSS, it reduces every frequency vector one to $n$ length by at least one.
        Therefore, the similarity score reduces more when we change an important element, in this case.
        \item \textit{Case-2:} When a common TTP between two sequences changes to another common TTP. In this case, the changed element can break a larger CSS into smaller ones or merge two CSS to make them larger; breaking reduces various elements of frequency vector, and bridging increases. In this case, the similarity score decreases when the changed element breaks the link and increases when it makes the larger link as expected. 
        \item \textit{Case-3:} When an uncommon TTP between two sequences changes to an uncommon TTP. Changing an irrelevant element into another irrelevant element does not make any change.
        \item \textit{Case-4:} When an uncommon TTP between two sequences changes to a common TTP. In this case, the changed element increases the frequency of single-length CSS and can be added at the end of existing CSS, which also increases the frequency of an additional length; it can link two CSS to make a larger CSS that increases the frequency of an additional length and various pressing lengths. Therefore, the similarity measure increases when an irrelevant element changes to more relevant. 
    \end{itemize}

\subsection{Multi-tactics TTPs} 
\label{sec:TTPs}
\begin{table*}[!ht]
    \centering
    \caption{Multi-tactics TTPs}
    \begin{tabularx}{\linewidth}{cp{6cm}X}
    \hline
     \textbf{Technique ID} & \textbf{Technique Name} & \textbf{Tactics} \\
     \hline
     T1037 & Boot or Logon Initialization Scripts & Privilege Escalation, Persistence\\
     T1557 & Adversary-in-the-Middle & Collection, Credential Access\\
     T1543 & Create or Modify System Process & Privilege Escalation, Persistence\\
     T1133 & External Remote Services & Initial Access, Persistence\\
     T1547 & Boot or Logon Autostart Execution & Privilege Escalation, Persistence\\
     T1040 & Network Sniffing & Discovery, Credential Access \\
     T1053 & Scheduled Task/Job & Privilege Escalation, Execution, Persistence \\
     T1091 & Replication Through Removable Media & Lateral Movement, Initial Access \\
     T1659 & Content Injection & Initial Access, Command and Control \\
     T1055 & Process Injection & Privilege Escalation, Defense Evasion \\
     T1205 & Traffic Signaling & Command and Control, Defense Evasion, Persistence \\
     T1550 & Use Alternate Authentication Material & Lateral Movement, Defense Evasion \\
     T1610 & Deploy Container & Execution, Defense Evasion \\
     T1548 & Abuse Elevation Control Mechanism & Privilege Escalation, Defense Evasion \\
     T1542 & Pre-OS Boot & Defense Evasion, Persistence \\
     T1497 & Virtualization/Sandbox Evasion & Discovery, Defense Evasion \\
     T1072 & Software Deployment Tools & Lateral Movement, Execution \\
     T1098 & Account Manipulation & Privilege Escalation, Persistence \\
     T1574 & Hijack Execution Flow & Privilege Escalation, Defense Evasion, Persistence \\
     T1078 & Valid Accounts & Privilege Escalation, Initial Access, Defense Evasion, Persistence \\
     T1546 & Event Triggered Execution & Privilege Escalation, Persistence \\
     T1056 & Input Capture & Collection, Credential Access \\
     T1197 & BITS Jobs & Defense Evasion, Persistence \\
     T1134 & Access Token Manipulation & Privilege Escalation, Defense Evasion \\
     T1622 & Debugger Evasion & Discovery, Defense Evasion \\
     T1484 & Domain Policy Modification & Privilege Escalation, Defense Evasion \\
     \hline
\end{tabularx}
    \label{tab:multitacticsTTP}
\end{table*}
In the current version of the ATT\&CK framework, i.e., $v14$, there are $14$ tactics and $201$ techniques. There are a total of $27$ out of $201$ TTPs belonging to multiple tactics. The details of such TTPs are in Table~\ref{tab:multitacticsTTP}.

\end{document}